\def\ps@pprintTitle{%
    \let\@oddhead\@empty
    \let\@evenhead\@empty
    \def\@oddfoot{}%
    \let\@evenfoot\@oddfoot}
\newtheorem{proposition}{Proposition}
\newtheorem{lemma}{Lemma}
\newtheorem{corollary}{Corollary}
\newtheorem{assumption}{Assumption}
\begin{document}

\begin{frontmatter}

\title{Robust Local Polynomial Regression with Similarity Kernels}
\author{Yaniv Shulman}
\ead{yaniv@shulman.info}

\begin{abstract}
Local Polynomial Regression (LPR) is a widely used nonparametric method for modeling complex relationships due to its flexibility and simplicity. It estimates a regression function by fitting low-degree polynomials to localized subsets of the data, weighted by proximity. However, traditional LPR is sensitive to outliers and high-leverage points, which can significantly affect estimation accuracy. This paper revisits the kernel function used to compute regression weights and proposes a novel framework that incorporates both predictor and response variables in the weighting mechanism. The focus of this work is a conditional density kernel that robustly estimates weights by mitigating the influence of outliers through localized density estimation. The proposed method is implemented in Python and is publicly available at \url{https://github.com/yaniv-shulman/rsklpr}. The population analysis quantifies the bias induced by density-based robust weighting, and the reported experiments show lower empirical bias than iterative robust LOWESS while remaining competitive with standard LOWESS. This advancement provides a promising extension to traditional LPR, opening new possibilities for robust regression applications.
\end{abstract}
\end{frontmatter}

\section{Introduction}
\label{s:Introduction}
Local polynomial regression (LPR) is a powerful and flexible statistical technique that has gained increasing popularity in recent years due to its ability to model complex relationships between variables. Local polynomial regression generalizes the polynomial regression and moving average methods by fitting a low-degree polynomial to a nearest neighbors subset of the data at the location. The polynomial is fitted using weighted ordinary least-squares, giving more weight to nearby points and less weight to points farther away. The value of the regression function for the point is then obtained by evaluating the fitted local polynomial using the predictor variable value for that data point. Local linear and local polynomial estimators have favorable boundary behavior and strong minimax-efficiency properties under standard smoothness conditions \citep{Fan1993,Fan1996Local}. The biggest advantage of this class of methods is not requiring a prior specification of a function i.e. a parameterized model. Instead, only a small number of hyperparameters need to be specified such as the type of kernel, a smoothing parameter and the degree of the local polynomial. The method is therefore suitable for modeling complex processes such as non-linear relationships, or complex dependencies for which no theoretical models exist. These two advantages, combined with the simplicity of the method, makes it one of the most attractive of the modern regression methods for applications that fit the general framework of least-squares regression but have a complex deterministic structure. 

Local polynomial regression incorporates the notion of proximity in two ways. The first is that a smooth function can be reasonably approximated in a local neighborhood by a simple function such as a linear or low order polynomial. The second is the assumption that nearby points carry more importance in the calculation of a simple local approximation or alternatively, that closer points are more likely to interact in simpler ways than far away points. This is achieved by a kernel which produces values that diminish as the distance between the explanatory variables increase to model stronger relationship between closer points. 

The relevant literature on classical and robust local regression is summarized in Section~\ref{S:RelatedWork}. LPR is susceptible to outliers, high leverage points and functions with discontinuities in their derivative, which can adversely affect least-squares based local estimates \citep{10.1002/wics.1492}.

The main contribution of this paper is to revisit the kernel used to produce regression weights. The simple yet effective idea is to generalize the kernel such that both the predictor and the response are used to calculate weights. Within this framework, a non-negative kernel based on conditional density estimation is proposed that assigns robust weights to mitigate the adverse effect of outliers in the local neighborhood. Note the proposed framework does not preclude the use of robust loss functions, robust bandwidth selectors and standardization techniques. In addition the method is implemented in the Python programming language and is made publicly available. In the experiments below, the proposed method substantially improves over iterative robust LOWESS in RMSE and bias diagnostics while remaining competitive with standard LOWESS, however robust LOWESS can still have lower MAE and median absolute error in the public benchmark.

The remainder of the paper is organized as follows: Section~\ref{S:RelatedWork} summarizes related work. In Section~\ref{S:Local Polynomial Regression}, a brief overview of the mathematical formulation of local polynomial regression is provided. In Section~\ref{S:Robust Weights with Similarity kernels}, a framework for robust weights and the specific conditional density kernel are proposed. Section~\ref{S:Properties} provides an analysis of the estimator and a discussion of its properties. In Section~\ref{S:Experiments and Implementation Notes}, implementation notes and experimental results are provided. Finally, in Section~\ref{S:Future work and research directions}, the paper concludes with directions for future research.

\section{Related Work}
\label{S:RelatedWork}

Local polynomial regression and LOWESS/LOESS estimate regression functions by fitting low-degree polynomials in local neighborhoods defined by predictor-space proximity. Classical kernel-regression and LOWESS/LOESS references establish the local-smoothing baseline, while local linear and local polynomial regression show why local polynomial fitting is often preferred over local averaging, especially for design adaptivity and reduced boundary bias \citep{Nadaraya1964OnER,Watson1964SmoothRA,cleveland79,cleveland_devlin88,Fan1993}.

Robust variants of local regression typically modify the fitting criterion or add residual-based reweighting so that large residuals have limited influence on the local fit. Cleveland's robust LOWESS is the classical example of iterative residual reweighting, while local \(M\)-estimation and adaptive robust local-polynomial methods provide related theoretical frameworks \citep{cleveland79,HardleGasser1984,ChichignoudLederer2014}; broader reviews are given in \citep{10.1002/wics.1492,SALIBIANBARRERA2023}.

The closest adjacent line to the proposed approach is the literature that uses response-space or conditional-density information in local fitting. Local modal regression introduces a response-space kernel and estimates conditional modes rather than conditional means, yielding robustness under skewed, heavy-tailed, or multimodal conditional distributions \citep{YaoLindsayLi2012}. In parallel, the conditional-density literature studies nonparametric estimators of \(f_{Y\mid X}(y\mid x)\) and associated bandwidth-selection methods \citep{HyndmanBashtannykGrunwald1996,BashtannykHyndman2001,HallRacineLi2004,CattaneoChandakJanssonMa2024}. RSKLPR sits between these lines: unlike residual-robust LOWESS, it uses estimated local response density rather than fitted residuals to form robustness weights; unlike local modal regression, it does not estimate the conditional mode, but retains a local-polynomial squared-error target whose oracle form is a density-weighted conditional mean. More robust density estimators, including divergence-based or RKHS-based robust KDE variants, are natural extensions left for future work \citep{BasuHarrisHjortJones1998,KimScott2012}.

\section{Local Polynomial Regression}
\label{S:Local Polynomial Regression}
This section provides a brief overview of local polynomial regression and establishes the notation subsequently used. We adopt the following standing assumptions: the training data $\mathcal{D}_T = \{(X_i,Y_i)\}_{i=1}^{T}$ are an i.i.d. sample from a continuous joint density $f_{X,Y}$; the error terms $\epsilon_i$ satisfy $\mathbb{E}[\epsilon_i | X_i] = 0$ and $\mathbb{E}[\epsilon_i^2 | X_i] = \sigma^2(X_i) < \infty$; the density of the predictors $f_X(x)$ is positive in the region of interest; and any kernel function $K$ is a non-negative, symmetric probability density function with finite second moments.

Let $( X, Y )$ be a random pair and $\mathcal{D}_T = \{(X_i,Y_i)\}_{i=1}^{T} \subseteq \mathcal{D}$ be a training set comprising a sample of $T$ data pairs. Suppose that $(X , Y) \sim f_{X,Y}$ a continuous density and $X \sim f_X$ the marginal distribution of $X$. Let $Y \in \mathbb{R}$ be a continuous response and assume a model of the form $Y_i=m(X_i) + \epsilon_i, \; i \in 1, \dots \, ,T$ where $m(\cdot): \mathbb{R}^d \rightarrow \mathbb{R }$ is an unknown function and $\epsilon_i$ are independently distributed error terms having zero mean such that $\mathbb{E}[Y \mid X=x] = m(x)$. There are no global assumptions about the function $m(\cdot)$ other than that it is smooth and that locally it can be well approximated by a low degree polynomial as per Taylor’s theorem. For notational simplicity, we present the one-dimensional case ($d=1$). The formulation extends to the multivariate case ($d>1$) by replacing powers with multi-indices (see, e.g., \citep{Fan1996Local}, \S3.2). The local $p$-th order Taylor expansion near a point $x$ yields:

\begin{align}
m(X_i) \approx \sum_{j=0}^p \frac{m^{(j)}(x)}{j!} (X_i - x)^j \coloneqq \sum_{j=0}^p \beta_j(x)(X_i - x)^j
\end{align}
Let \(r_p(t)=(1,t,\ldots,t^p)^\top\), so a local polynomial can be written as \(r_p(u-x)^\top\beta\).
To find an estimate $\hat{m}(x)$ of $m(x)$ the low-degree polynomial is fitted to the $N$ nearest neighbors using weighted least-squares such to minimize the empirical loss $\mathcal{L}_{\text{lpr}}(x; \mathcal{D}_N , h)$ :

\begin{align}
\mathcal{L}_{\text{lpr}}(x; \mathcal{D}_N , h) \coloneqq \sum_{i=1}^N \left(  Y_i - \sum_{j=0}^p \beta_j (x) (X_i-x)^j \right)^2 K_h(x-X_i)
\label{loss_lpr}
\end{align}
where $\beta(x) \in \mathbb{R}^{p+1}$ are the polynomial coefficients to be estimated. The minimizer is
\begin{align}
\hat{\beta}(x) \coloneqq \arg\min_{\beta(x)} \mathcal{L}_{\text{lpr}}(x; \mathcal{D}_N,h)
\label{beta_hat}
\end{align}
Where $K_h(\cdot) = h^{-d}K(\cdot/h)$ is a scaled kernel, $h \in \mathbb{R}_{> 0}$ is the bandwidth parameter and $\mathcal{D}_N \subseteq \mathcal{D}_T$ is the subset of $N$ nearest neighbors of $x$ in the training set where the distance is measured on the predictors only. Having computed $\hat{\beta}(x)$ the estimate of $m(x)$ is taken as $\hat{m}(x) = \hat{\beta}_0(x)$.

\begin{assumption}[WLS uniqueness]
\label{ass:wls_uniqueness}
Whenever a unique weighted least-squares coefficient vector is invoked, the active weighted local design matrix has full column rank. Equivalently, for the relevant nonnegative combined weights \(a_i\),
\[
\sum_i a_i r_p(X_i-x)r_p(X_i-x)^\top
\]
is nonsingular. In population objectives, the analogous weighted local-polynomial moment matrix is assumed nonsingular. For \(p=0\), this reduces to a positive total weight.
\end{assumption}

The term kernel carries here the meaning typically used in the context of nonparametric regression i.e. a non-negative real-valued weighting function that is typically symmetric, unimodal at zero, and integrates to one. Higher degree polynomials and smaller $N$ generally increase the variance and decrease the bias of the estimator and vice versa \citep{Fan1996Local,Garcia-Portugues2023}. For derivation of the local constant and local linear estimators for the multidimensional case see \citep{Garcia-Portugues2023}.

\paragraph{Remark on Nearest Neighbors and Bandwidth.}
In the following, the local neighborhood is defined by taking the \(N\) nearest neighbors to \(x\). Thus, \(\mathcal{D}_N \subseteq \mathcal{D}_T\) contains exactly \(N\) points. A distance-based kernel \(K_h\) is then used to weight those neighbors.

\section{Robust Weights with Similarity Kernels}
\label{S:Robust Weights with Similarity kernels}
The main idea presented is to generalize the kernel function used in equation \eqref{loss_lpr} to produce robust weights. This is achieved by using a similarity kernel function defined on the data domain $\mathcal{K}_{\mathcal{D}}:\mathcal{D}^2 \rightarrow \mathbb{R}_+$ that enables weighting each point and incorporating information on the data in the local neighborhood in relation to the local regression target $(x,y)$.
We use the term similarity kernel broadly: \(\mathcal K_{\mathcal D}\) may encode metric proximity between data points, but it may also encode other notions of compatibility or typicality.

The proposed empirical loss function is:
\begin{align}
\mathcal{L}_{\text{rsk}}(x, y; \mathcal{D}_N , \mathcal{H}) \coloneqq \sum_{i=1}^N \left(  Y_i - \sum_{j=0}^p \beta_j (x, y) (X_i - x)^j \right)^2 \mathcal{K}_{\mathcal{D}} \left((x,y),(X_i, Y_i); \mathcal{H} \right)
\label{loss_rsk}
\end{align}
The estimated coefficients are found by minimizing this loss:
\begin{align}
\hat{\beta}(x,y ; \mathcal{D}_N, \mathcal{H} ) \coloneqq \arg\min_{\beta(x,y)} \mathcal{L}_{\text{rsk}}(x, y; \mathcal{D}_N , \mathcal{H}) 
\label{beta_loss_d}
\end{align}
Where $\mathcal{H}$ is the set of bandwidth parameters. There are many possible choices for such a similarity kernel to be defined within this general framework. However, used as a local weighting function, such a kernel should have the following attributes:
\begin{enumerate}
\item Non-negative, $\mathcal{K}_{\mathcal{D}}((x,y) , ( x', y')) \geq 0$.
\item Symmetry in the inputs, $\mathcal{K}_{\mathcal{D}}((x,y) , ( x', y')) = \mathcal{K}_{\mathcal{D}}((x', y') , (x,y))$.
\item Tending toward decreasing as the distance in the predictors increases. That is, given  a similarity function on the response $s(\cdot, \cdot): \mathbb{R}^2 \rightarrow \mathbb{R}_+$, if $s(y, y')$ indicates high similarity the weight should decrease as the distance between the predictors grows, $s(y, y') > \alpha \implies      
\mathcal{K}_{\mathcal{D}}((x, y) , ( x + a, y')) \geq \mathcal{K}_{\mathcal{D}}((x, y) , (x + b, y')) \; \; \forall \: \norm{a} \leq \: \norm{b}$ and some $\alpha \in \mathbb{R}_+$.
\end{enumerate}

In this work a useful non-negative kernel is proposed. Similarly to the usual kernels used in \eqref{loss_lpr}, these tend to diminish as the distance between the explanatory variables increases to model stronger relationship between closer points. In addition, the weights produced by the kernels also model the "importance" of the pair $(x,y)$. This is useful for example to down-weight outliers to mitigate their adverse effect on the ordinary least square based regression. Note that for the Reproducing Kernel Hilbert Space (RKHS) interpretation discussed in Section \ref{S:Properties}, the kernel $\mathcal{K}_{\mathcal{D}}$ must also be positive-definite, but this condition is not required for the main results of this paper. Formally let $\mathcal{K}_{\mathcal{D}}$ be defined as:
\begin{align}
\mathcal{K}_{\mathcal{D}} \left((x,y),(x', y') ; \mathcal{H}_1, \mathcal{H}_2 \right) &= K_1(x,x' ; \mathcal{H}_1) K_2 \left((x,y),(x', y') ; \mathcal{H}_2 \right)
\label{compound_kernel_def}
\end{align}
Where $K_1: \mathbb{R}^d \times \mathbb{R}^d \rightarrow \mathbb{R}_+$ and $K_2: \mathcal{D}^2 \rightarrow \mathbb{R}_+$ are non-negative kernels and $\mathcal{H}_1$, $\mathcal{H}_2$ are the sets of bandwidth parameters. The purpose of $K_1$ is to account for the distance between a neighbor to the local regression target and therefore may be chosen as any of the kernel functions that are typically used in equation \eqref{loss_lpr}. The role of $K_2$ is to perform robust regression by detecting local outliers in an unsupervised manner and assigning them with lower weights.

The material below gives a kernel-agnostic lemma that shows when the optimisation for the empirical estimator $\hat\beta(x)$ is invariant to the (unknown) response value $y$ at the regression location $x$. A corollary then specialises this result to the conditional-density kernel, which is the focus of this paper.

\begin{lemma}[Invariance under separable similarity kernels]
\label{lemma:invariance}
Let the similarity kernel be
\[
\mathcal{K}_{\mathcal{D}}\!\bigl((x,y),(x',y');\mathcal{H}\bigr)
= K_1\!\bigl(x,x';\mathcal{H}_1\bigr)\,
K_2\!\bigl((x,y),(x',y');\mathcal{H}_2\bigr),
\]
with $K_1$ being any non-negative kernel function on $\mathbb{R}^d \times \mathbb{R}^d$, and let $K_2$ be separable:
\[
K_2\!\bigl((x,y),(x',y');\mathcal{H}_2\bigr)
= c(x,y)\,w(x',y'),
\qquad \text{where } c(x,y)>0 \text{ and } w(x',y')\ge 0.
\]
Assume Assumption~\ref{ass:wls_uniqueness} holds with weights
\[
a_i=K_1(x,X_i;\mathcal{H}_1)w(X_i,Y_i).
\]
Then the empirical loss \eqref{loss_rsk} becomes
\[
\mathcal{L}_{\text{rsk}}(x,y;\mathcal{D}_N,\mathcal{H})
= c(x,y)\!\sum_{i=1}^{N}
\Bigl(Y_i-\textstyle\sum_{j=0}^{p}\beta_j(x)(X_i - x)^j\Bigr)^2
K_1(x,X_i;\mathcal{H}_1)\,w(X_i,Y_i),
\]
so the weighted least-squares minimiser with respect to $\beta_j$ is unique, independent of $y$, and will be denoted $\hat\beta(x) = (\hat\beta_0(x), \dots, \hat\beta_p(x))^T$. For a general separable kernel \(K_2(z,z')=c(z)w(z')\), symmetry requires
\[
c(z)w(z')=c(z')w(z)
\qquad \text{for all } z,z'.
\]
In the nontrivial case, this means that \(c\) is proportional to \(w\) on the relevant support. The choice \(c=w\) is the canonical special case used by the conditional-density product kernel. Symmetry is not required for the weighted least-squares minimization problem itself.
\end{lemma}
The proof is given in~\ref{appendix:proof_invariance}.

\subsection*{Conditional Density Kernel}
The primary method proposed for $K_2$ is proportional to the conditional density of the response.
The conditional-density product kernel used below is a separable example of this broader class; after cancellation of the target-dependent factor, it acts as a local density-based reweighting of the training observations.
The ideal conditional-density kernel is defined using the oracle conditional density of the response:
\[
K_2\!\bigl((x,y),(x',y')\bigr)
=
f_{Y\mid X}(y\mid x)\,
f_{Y\mid X}(y'\mid x').
\]

In finite samples, the unknown conditional density is replaced by a plug-in estimate \(\hat f_{Y\mid X}\), yielding
\[
K_2\!\bigl((x,y),(x',y');\mathcal{H}_2\bigr)
=\hat f_{Y\mid X}(y\mid x; \mathcal{H}_2)\,
\hat f_{Y\mid X}(y'\mid x'; \mathcal{H}_2),
\]
where $\hat f_{Y\mid X}(\cdot\mid\cdot; \mathcal{H}_2)$ is an estimated conditional density of the response with bandwidth(s) $\mathcal{H}_2$.
Given a density-estimation sample \(\mathcal D_M^{\mathrm{kde}}=\{(X_\ell,Y_\ell)\}_{\ell=1}^{M}\), with positive definite bandwidth matrices \(\mathbf H_{xy}\) and \(\mathbf H_x\) and positive denominator below, the plug-in conditional-density estimate is
\begin{equation}
\hat f_{Y\mid X}(y\mid x;\mathcal H_2)
=
\frac{
|\mathbf H_{xy}|^{-1/2}
\sum_{\ell=1}^{M}
K_{xy}\!\left(
\mathbf H_{xy}^{-1/2}
\left(
\begin{bmatrix}x\\y\end{bmatrix}
-
\begin{bmatrix}X_\ell\\Y_\ell\end{bmatrix}
\right)
\right)
}{
|\mathbf H_x|^{-1/2}
\sum_{\ell=1}^{M}
K_x\!\left(
\mathbf H_x^{-1/2}(x-X_\ell)
\right)
}.
\label{conditional_prob_est}
\end{equation}
In the finite-sample nearest-neighbor implementation, \(\mathcal D_M^{\mathrm{kde}}=\mathcal D_N(x)\), so the plug-in density estimate is target-specific; the empirical density weight for \((X_i,Y_i)\) is more precisely \(\hat f^{(x)}_{Y\mid X}(Y_i\mid X_i)\). The cancellation argument below is read conditionally on the selected neighborhood for the current target \(x\): the resulting weights are independent of the unknown response value \(y\), but need not define a single global symmetric kernel on \(\mathcal D^2\) unless the density-estimation sample is fixed independently of the target point.

\begin{corollary}[Conditional–density kernel objective]
\label{cor:cond_density_obj}
Choose $K_1(x,x';\mathcal{H}_1)$ to be a standard kernel for local polynomial regression, such as $K_{h_1}(x-x')$, and let $K_2$ be the finite-sample plug-in conditional density kernel defined above.
Then, we can identify
\[
c(x,y)=\hat f_{Y\mid X}(y\mid x; \mathcal{H}_2),
\qquad \text{and} \qquad
w(X_i,Y_i)=\hat f_{Y\mid X}(Y_i\mid X_i; \mathcal{H}_2).
\]
Assuming $\hat f_{Y\mid X}(y\mid x; \mathcal{H}_2) > 0$, at least one combined training weight is positive, and Assumption~\ref{ass:wls_uniqueness} holds for the active weighted local design, Lemma~\ref{lemma:invariance} yields the simplified weighted least-squares objective for $\hat\beta(x)$:
\[
\tilde{\mathcal{L}}(x)
=\sum_{i=1}^{N}
\Bigl(Y_i-\sum_{j=0}^{p}\beta_j(x)(X_i - x)^j\Bigr)^2
K_{h_1}(x-X_i)\,\hat f_{Y\mid X}(Y_i\mid X_i; \mathcal{H}_2),
\]
which is the empirical objective function used by the finite-sample implementation.
\end{corollary}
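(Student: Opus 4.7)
The plan is to show this is a direct specialisation of Lemma~\ref{lemma:invariance}, so the main work is simply to verify that the conditional density kernel admits the required separable structure and that the extra positivity hypothesis is used correctly. First I would write the product form
\[
K_2\!\bigl((x,y),(x',y');\mathcal{H}_2\bigr)
= \hat f_{Y\mid X}(y\mid x;\mathcal{H}_2)\,\hat f_{Y\mid X}(y'\mid x';\mathcal{H}_2),
\]
and read off $c(x,y)=\hat f_{Y\mid X}(y\mid x;\mathcal{H}_2)$ and $w(x',y')=\hat f_{Y\mid X}(y'\mid x';\mathcal{H}_2)$. The nonnegativity $w\ge 0$ is immediate from the Parzen--Rosenblatt estimator \eqref{conditional_prob_est}, and the corollary's standing assumption $\hat f_{Y\mid X}(y\mid x;\mathcal{H}_2)>0$ delivers the strict positivity $c(x,y)>0$ that Lemma~\ref{lemma:invariance} requires.

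Next I would substitute these identifications into the factored loss produced by Lemma~\ref{lemma:invariance}, giving
\[
\mathcal{L}_{\text{rsk}}(x,y;\mathcal{D}_N,\mathcal{H})
= \hat f_{Y\mid X}(y\mid x;\mathcal{H}_2)\sum_{i=1}^{N}
\Bigl(Y_i-\sum_{j=0}^{p}\beta_j(x)(X_i-x)^j\Bigr)^2
K_{h_1}(x-X_i)\,\hat f_{Y\mid X}(Y_i\mid X_i;\mathcal{H}_2),
\]
using the fact that choosing $K_1(x,x';\mathcal{H}_1)=K_{h_1}(x-x')$ matches the standard distance-based kernel appearing in \eqref{loss_lpr}.

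Finally, since $\hat f_{Y\mid X}(y\mid x;\mathcal{H}_2)$ is a strictly positive scalar with respect to the optimisation variables $\beta_0(x),\dots,\beta_p(x)$, dividing through does not change the argmin. This produces $\tilde{\mathcal{L}}(x)$ exactly as stated, and the minimiser $\hat\beta(x)$ of $\tilde{\mathcal{L}}$ coincides with that of $\mathcal{L}_{\text{rsk}}(x,y;\mathcal{D}_N,\mathcal{H})$, confirming the $y$-independence promised by Lemma~\ref{lemma:invariance}.

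There is no real technical obstacle here: the corollary is a bookkeeping consequence of the lemma. The only point that merits care is making explicit that the conditional-density kernel \emph{is} separable in the sense required (a product of a factor depending only on $(x,y)$ and one depending only on $(x',y')$), since that fact is what collapses the weight at the query point into an irrelevant positive constant and leaves a standard weighted least squares problem whose per-observation weights $K_{h_1}(x-X_i)\,\hat f_{Y\mid X}(Y_i\mid X_i;\mathcal{H}_2)$ do the robust down-weighting of local outliers.
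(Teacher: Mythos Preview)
Your proposal is correct and mirrors the paper's approach: the paper does not even supply a separate proof for this corollary, treating it as an immediate specialisation of Lemma~\ref{lemma:invariance} once $c$ and $w$ are identified. Your write-up simply spells out the verification steps (separability, positivity of $c$, division by the scalar prefactor) that the paper leaves implicit.
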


\begin{figure}[!htbp]
\centering
\begin{subfigure}[t]{0.36\textwidth}
\centering
\includegraphics[width=\textwidth]{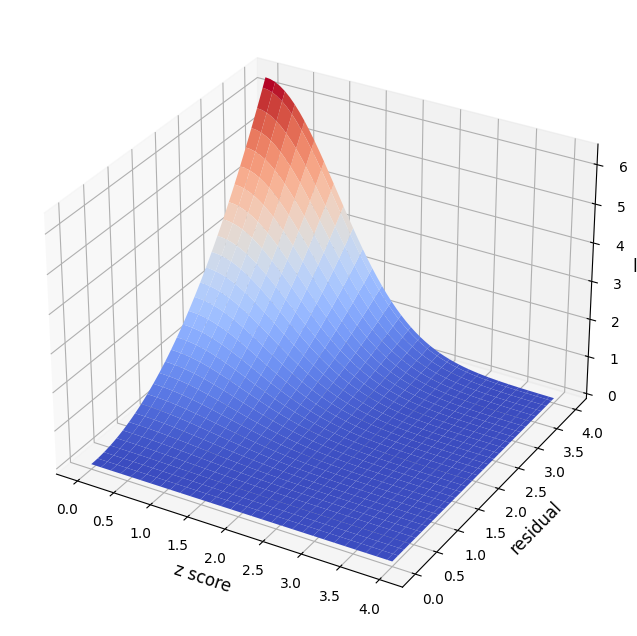}
\caption{Loss surface over residual and response value.}
\end{subfigure}
\hfill
\begin{subfigure}[t]{0.48\textwidth}
\centering
\includegraphics[width=\textwidth]{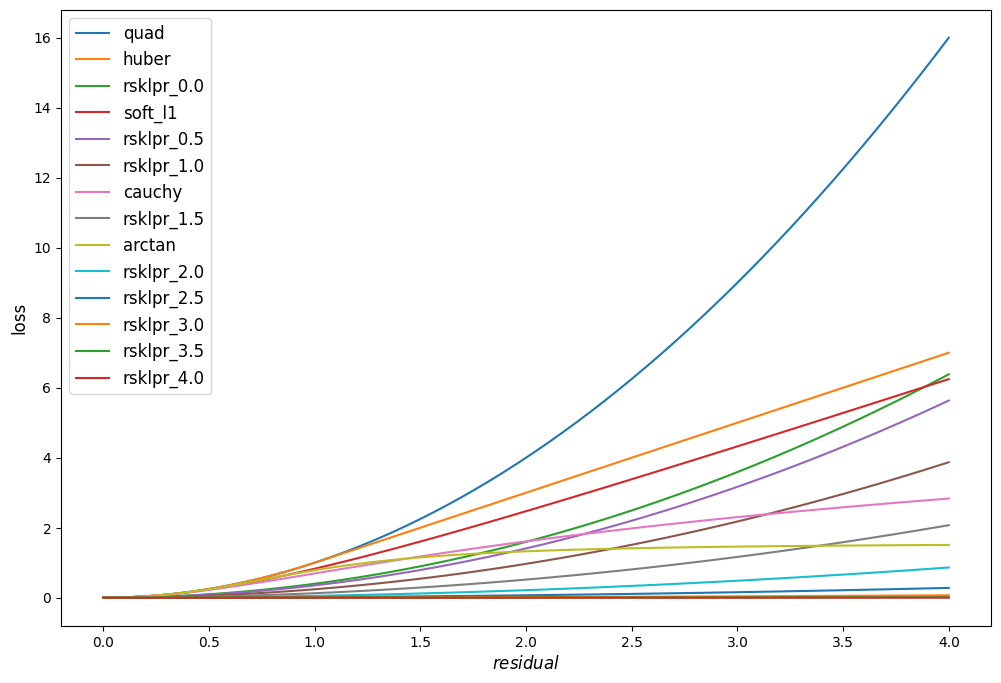}
\caption{Slices compared with quadratic, Huber, and Cauchy losses.}
\end{subfigure}
\caption{Diagnostic visualizations of the density-weighted squared-error loss. Panel (a) shows residual on the horizontal axis and response value on the depth axis; panel (b) shows one-dimensional slices at the indicated numbers of response standard deviations from the mean. Both plots use a quadratic residual loss, a standard Gaussian response-density proxy for $K_2$, and omit the $K_1$ distance-kernel scaling. The vertical axis is proportional to loss $\times$ density, illustrating attenuation in low-density response regions.}
\end{figure}

Regardless of the choice of kernel, the hyperparameters of this model are similar in essence to the standard local polynomial regression and comprise the span of included points, the kernels and their associated bandwidths. Note that the density estimator can be replaced with other robust density estimators however exploring this option is left for future work.

\section{Properties}
\label{S:Properties}

This section discusses the properties of the proposed estimator, beginning with its interpretation on a finite sample and then moving to an idealized population-target analysis. The population analysis below is separate from the finite-sample nearest-neighbor implementation used in the experiments.

\subsection{Finite-Sample Interpretation as a Re-weighted LPR}
At the sample level, the proposed estimator can be understood as a direct re-weighting of the terms in the standard LPR loss function. The weights are determined by the local conditional density of the response.

\begin{proposition}[Equivalence to a Re-weighted LPR Objective]
\label{prop:reweighting}
Minimizing the proposed empirical loss from Corollary \ref{cor:cond_density_obj},
\[
\tilde{\mathcal{L}}(x)
=\sum_{i=1}^{N}
\Bigl(Y_i-\sum_{j=0}^{p}\beta_j(x)(X_i - x)^j\Bigr)^2
K_{h_1}(x-X_i)\,\hat f_{Y\mid X}(Y_i\mid X_i),
\]
is equivalent to minimizing a weighted average of the standard LPR loss terms, where each term's contribution is scaled by its estimated conditional density $\hat f_{Y\mid X}(Y_i\mid X_i)$. Assume Assumption~\ref{ass:wls_uniqueness} holds with weights
\[
a_i=K_{h_1}(x-X_i)\hat f_{Y\mid X}(Y_i\mid X_i).
\]
\end{proposition}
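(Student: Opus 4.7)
The plan is straightforward because the proposition amounts to a rearrangement of the objective rather than a substantive analytic step. First I would introduce the per-sample standard LPR loss contribution
\[
\ell_i(x;\beta) \coloneqq \Bigl(Y_i - \sum_{j=0}^p \beta_j(x)(X_i-x)^j\Bigr)^2 K_{h_1}(x - X_i),
\]
so that the classical LPR loss from equation \eqref{loss_lpr} is $\mathcal{L}_{\text{lpr}}(x) = \sum_{i=1}^N \ell_i(x;\beta)$. This isolates the ``LPR part'' of each summand from the additional conditional-density factor introduced by the proposed method.

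Next I would observe that the proposed objective from Corollary~\ref{cor:cond_density_obj} factorises trivially as
\[
\tilde{\mathcal{L}}(x) = \sum_{i=1}^N \ell_i(x;\beta)\,\hat f_{Y\mid X}(Y_i\mid X_i),
\]
simply by grouping the square-residual and the $K_{h_1}$ kernel factor together inside each summand. The crucial point is that the multiplicative weights $\hat f_{Y\mid X}(Y_i\mid X_i)$ are functions of the observed data alone and are independent of the polynomial coefficients $\beta_j(x)$ being optimised; they can thus be treated as fixed scalars in the minimisation.

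To close the argument I would note that $\tilde{\mathcal{L}}(x)$ is therefore a non-negative linear combination of the standard per-sample LPR loss terms with coefficients $\hat f_{Y\mid X}(Y_i\mid X_i)\ge 0$, yielding precisely the claimed ``weighted average of the standard LPR loss terms''. The minimiser $\hat\beta(x)$ obtained from $\tilde{\mathcal{L}}(x)$ is the same as that obtained by applying weighted least-squares with sample weights $K_{h_1}(x-X_i)\,\hat f_{Y\mid X}(Y_i\mid X_i)$, which completes the equivalence.

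There is no real obstacle here beyond book-keeping; the content of the proposition is interpretive rather than technical. Its value is in making explicit that the proposed estimator can be viewed as a data-adaptive reweighting of classical LPR, where observations in low-density regions of the response (i.e. candidate outliers) receive attenuated influence on the fit, thereby motivating the robustness analysis in the remainder of the section.
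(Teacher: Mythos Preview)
Your argument is correct and follows essentially the same approach as the paper: both proofs identify the conditional-density factors as fixed non-negative scalars independent of $\beta$ and read the objective as a weighted sum of the standard LPR per-sample terms. The paper adds one small cosmetic step you omit---it explicitly divides by $\sum_k \hat f_{Y\mid X}(Y_k\mid X_k)$ (under the mild assumption this sum is positive) so that the expression is literally a weighted \emph{average} rather than merely a weighted sum, and notes the edge case where bounded-support density kernels could in principle make all weights vanish.
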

The proof is given in~\ref{appendix:proof_reweighting}.

\subsection{Asymptotic Properties}
We now analyze the idealized population objective corresponding to the density-weighted loss. Throughout this subsection, \(w(u,v)\) denotes the population-level non-negative weight function, with the conditional-density target obtained from the oracle weight
\[
w(u,v)=f_{Y\mid X}(v\mid u).
\]

\begin{assumption}[Local weighted centering for the local-linear case]
\label{ass:local_weighted_centering}
For the local-linear case \(p=1\), assume that the first centered weighted design moment vanishes:
\[
M_1(x)
=
\iint
(u-x)K_{h_1}(x-u)w(u,v)f_{X,Y}(u,v)\,\mathrm d v\,\mathrm d u
=
0.
\]
When \(u,x\in\mathbb R^d\), this is understood componentwise. Equivalently, the weighted design density
\[
A(u)=\int w(u,v)f_{X,Y}(u,v)\,\mathrm d v
\]
is locally centered by the kernel around \(x\). A sufficient condition is that both \(K_{h_1}\) and \(A(u)\) are symmetric about \(x\); kernel symmetry alone, or conditional response-density symmetry alone, is insufficient.
\end{assumption}

This condition is used only to reduce the \(p=1\) local-linear intercept to the simple ratio form below; no such assumption is required for \(p=0\). It may be plausible at interior points when \(A(u)\) varies slowly over the kernel window, but can fail near boundaries or under strongly asymmetric designs.

\begin{proposition}[Population Objective and the Intercept Term]
\label{prop:population_objective}
Let $f_{X,Y}(u,v)$ denote the joint density of $(X,Y)$ where $X \in \mathbb{R}^d, Y \in \mathbb{R}$. For a chosen regression point $x \in \mathbb{R}^d$, define the population objective function as
\[
\mathcal J(x;\beta)
=\iint_{\mathbb R^{d}\!\times\!\mathbb R}
\bigl(v-g_x(u;\beta)\bigr)^{2}
K_{h_1}(x-u)\,
w(u,v)\,
f_{X,Y}(u,v)\,\mathrm d v\,\mathrm d u,
\]
where $g_x(u;\beta)$ is the local polynomial, $K_{h_1}$ is a kernel function, and $w(u,v)$ is a population-level non-negative weight function. Assume the weighted second moment is finite:
\[
\iint v^2K_{h_1}(x-u)w(u,v)f_{X,Y}(u,v)\,\mathrm d v\,\mathrm d u<\infty.
\]
Using the basis \(r_p\) defined in Section~\ref{S:Local Polynomial Regression}, define the weighted local-polynomial design moment matrix
\[
S_p(x)=
\iint r_p(u-x)r_p(u-x)^\top
K_{h_1}(x-u)w(u,v)f_{X,Y}(u,v)\,\mathrm d v\,\mathrm d u.
\]
For \(p\ge1\), assume \(S_p(x)\) is nonsingular.
Let $\beta^{\star}(x)=\arg\!\min_{\beta}\,\mathcal J(x;\beta)$. The first component, $\beta^{\star}_0(x)$, represents the local intercept of the polynomial fit at $x$. Define
\begin{align}
M_0(x)
&=\iint
K_{h_1}(x-u)\,
w(u,v)\,
f_{X,Y}(u,v)\,\mathrm d v\,\mathrm d u, \label{eq:M0}\\
R_0(x)
&=\iint v\,
K_{h_1}(x-u)\,
w(u,v)\,
f_{X,Y}(u,v)\,\mathrm d v\,\mathrm d u. \label{eq:R0}
\end{align}
Assume $M_0(x)>0$.
For local constant regression ($p=0$), with $g_x(u;\beta)=\beta_0$, the population minimizer satisfies
\begin{equation}
\beta^{\star}_0(x)=\frac{R_0(x)}{M_0(x)}. \label{eq:pop_intercept}
\end{equation}
For the local-linear case ($p=1$), write
\[
g_x(u;\beta)=\beta_0+\beta_1^\top(u-x),
\]
and suppose Assumption~\ref{ass:local_weighted_centering} holds. Then the population minimizer also satisfies \eqref{eq:pop_intercept}. The ratio characterization is restricted to the local constant case and to the local-linear case under Assumption~\ref{ass:local_weighted_centering}.
\end{proposition}
The proof is given in~\ref{appendix:proof_population_objective}. For higher-order local polynomials ($p \ge 2$), the simple ratio formula is generally not valid. The unique intercept should instead be obtained as the first component of the full weighted normal-equation solution based on \(S_p(x)\). We therefore restrict the explicit ratio characterization to the local constant case and to the locally centered local-linear case.

For the oracle conditional-density target, define
\[
C(u)=\int [f_{Y\mid X}(v\mid u)]^2\,\mathrm d v,
\qquad
\mu'(u)=
\frac{\int v[f_{Y\mid X}(v\mid u)]^2\,\mathrm d v}{C(u)}.
\]

\begin{assumption}[Local squared-density moments]
\label{ass:local_squared_density_moments}
For \(u\) in a neighborhood of \(x\),
\[
\begin{gathered}
0<C(u)<\infty,\\
\int |v|[f_{Y\mid X}(v\mid u)]^2\,\mathrm d v<\infty,\\
\int v^2[f_{Y\mid X}(v\mid u)]^2\,\mathrm d v<\infty.
\end{gathered}
\]
\end{assumption}

The finite-sample implementation replaces this oracle weight by a kernel conditional-density estimate \(\hat f_{Y\mid X}(v\mid u)\), as described in Section~\ref{S:Experiments and Implementation Notes}.

\begin{proposition}[Asymptotic Conditional-Density Target]
\label{prop:conditional_density_asymptotic_target}
Suppose Assumption~\ref{ass:local_squared_density_moments} holds. The oracle conditional-density population criterion is equivalent, up to a \(\beta\)-independent term, to ordinary local polynomial regression of \(\mu'(u)\) with design weight \(C(u)f_X(u)\). Under standard local-polynomial assumptions, including an interior point \(x\), shrinking bandwidth, continuity and positivity of \(C(u)f_X(u)\), and sufficient smoothness of \(\mu'\) near \(x\), the intercept of the population local-polynomial fit converges to \(\mu'(x)\).
\end{proposition}
The proof is given in~\ref{appendix:proof_conditional_density_asymptotic_target}.

\begin{corollary}[Population Target with Conditional Density Weights]
\label{cor:population_target}
Consider the specific case where the weight function is the true conditional density, $w(u,v) = f_{Y\mid X}(v\mid u)$, and suppose \(C(u)>0\) and the integrability conditions of Assumption~\ref{ass:local_squared_density_moments} hold on the relevant neighborhood of \(x\).
Assume further that \(C(u)f_X(u)\) and \(\mu'(u)C(u)f_X(u)\) are continuous at \(x\), and that \(x\) is an interior point or otherwise satisfies the relevant boundary regularity conditions. In the cases covered by Proposition \ref{prop:population_objective}, the population intercept $\beta_0^\star(x)$ from \eqref{eq:pop_intercept} takes the explicit form:
\[
\beta_0^\star(x) = \frac{\iint v\,K_{h_1}(x-u)\,[f_{Y\mid X}(v\mid u)]^2\,f_X(u)\,\mathrm d v\,\mathrm d u}{\iint K_{h_1}(x-u)\,[f_{Y\mid X}(v\mid u)]^2\,f_X(u)\,\mathrm d v\,\mathrm d u}.
\]
This expression can be rewritten as a locally weighted average of $\mu'(u)$:
\[
\beta_0^\star(x) = \frac{\int K_{h_1}(x-u)\,\mu'(u)\,C(u)\,f_X(u)\,\mathrm d u}{\int K_{h_1}(x-u)\,C(u)\,f_X(u)\,\mathrm d u}.
\]
This is the exact finite-bandwidth ratio from Proposition~\ref{prop:population_objective} for the local constant case and for the locally centered local-linear case. Proposition~\ref{prop:conditional_density_asymptotic_target} gives the corresponding asymptotic local-polynomial target, \(\mu'(x)\), for the full local-polynomial fit. This target is generally different from the true conditional mean $m(x) = \mathbb{E}[Y|X=x]$ and provides the formal basis for the asymptotic bias discussion.
\end{corollary}

\subsection{Asymptotic Target and Conditions for Unbiasedness}
Proposition~\ref{prop:conditional_density_asymptotic_target} establishes that the oracle population objective has pointwise target $\mu'(x)$. A crucial question is under what conditions this target coincides with the true regression function, $m(x) = \mathbb{E}[Y|X=x]$. The exact condition is
\[
\int (v-m(x))[f_{Y\mid X}(v\mid x)]^2\,\mathrm d v=0,
\]
or equivalently $\mu'(x)=m(x)$.

A simple sufficient condition is conditional symmetry around $m(x)$. This includes the normal distribution, but also any symmetric conditional density (e.g., Laplace, Student's t). If for each fixed $x$, the conditional density $f_{Y\mid X}(v\mid x)$ is symmetric around $m(x)$, then $[f_{Y\mid X}(v\mid x)]^2$ is also symmetric around $m(x)$, so the squared-density expectation remains $m(x)$ and therefore $\mu'(x)=m(x)$. Under conditional symmetry and standard localization conditions, the pointwise asymptotic target of the proposed population objective therefore coincides with the conditional mean \(m(x)\). However, the finite-bandwidth objective is not identical to the standard LPR objective: when \(w(u,v)=f_{Y\mid X}(v\mid u)\), integrating over \(v\) introduces the effective factor \(C(u)\), so neighboring covariate values are weighted by \(f_X(u)C(u)\), rather than by \(f_X(u)\) alone. Thus, even under conditional symmetry, the two objectives should be understood as having the same local pointwise target in the limit \(h_1\to0\), not as being equivalent at finite bandwidth. For example, under Gaussian conditional errors with variance \(\sigma^2(u)\), \(C(u)\) is proportional to \(1/\sigma(u)\), so heteroscedasticity changes the effective local weighting.

When this squared-density unbiasedness condition fails, the mean under the squared density $\mu'(X)$ differs from the true mean $m(X)$, introducing an asymptotic bias of $\text{Bias}(x) = \mu'(x) - m(x)$. An example quantifying this bias for the asymmetric exponential distribution is provided in \ref{appendix:asym_bias_example}.

\subsection{Comparison with Standard and Iterative Robust LPR}
While the proposed robust method builds on the LPR framework, its weighting mechanism introduces key differences.

\subsubsection{The Core Difference vs. Standard LPR: The Weighting Function}
The fundamental difference lies in what determines the "importance" of a neighboring data point $(u,v)$ when estimating the regression function at a point $x$. For standard LPR, the population objective aims to minimize:
    \[\mathcal{J}_{\text{std}}(x;\beta) = \iint \bigl(v-g_x(u;\beta)\bigr)^{2} K_{h_1}(x-u) f_{X,Y}(u,v)\,\mathrm d v\,\mathrm d u\]
    The weight is determined by the kernel $K_{h_1}(x-u)$ and the data-generating process, but it is linear in the conditional density term $f_{Y|X}(v|u)$. \newline \newline
For the proposed method (with $w(u,v) = f_{Y|X}(v|u)$), the population objective is:
    \[\mathcal{J}_{\text{rsk}}(x;\beta) = \iint \bigl(v-g_x(u;\beta)\bigr)^{2} K_{h_1}(x-u) [f_{Y|X}(v|u)]^2 f_X(u)\,\mathrm d v\,\mathrm d u\]
    The proposed method's key innovation is the squaring of the conditional density term, $[f_{Y|X}(v|u)]^2$. This change amplifies the weighting effect, more strongly down-weighting observations $(u,v)$ where the response $v$ is unlikely given the predictor $u$.

\subsubsection{The True Counterpart: Iterative Robust Methods}
The direct practical counterpart to the proposed method is iterative robust LPR, such as the procedure used in LOWESS. These methods use an \emph{iterative approach} by repeatedly fitting the data and adjusting weights. After each fit, residuals are calculated, and new "robustness weights" are assigned to each point, typically by down-weighting points with large residuals. In contrast, the proposed method is a single-step procedure where the weights are derived from an explicit estimate of the data-generating distribution itself. 

It is understood that many robust estimators can introduce some bias as a price for their resilience to outliers. While iterative robust LPR is also subject to such biases, this aspect often receives insufficient attention in the literature, largely due to the analytical challenges involved. In contrast, the proposed method, by virtue of its non-iterative nature and direct link to the data distribution, makes this trade-off explicit. The bias towards $\mu'(x)$ is clearly defined and can be analyzed, offering a degree of theoretical transparency that is not readily available for its iterative counterparts.

\subsection{Trade-off Between Robustness and Bias via the \texorpdfstring{\( K_2 \)}{K2} Kernel and Bandwidth Selection}
The proposed estimator utilizes the \( K_2 \) kernel to adjust data point weights based on both predictors and responses, controlling the trade-off between robustness and bias. The bandwidth \( \mathcal{H}_2 \) of the \( K_2 \) kernel plays a crucial role in this mechanism.

In the finite-sample plug-in objective, the density-only weight is \(w_i=\hat{f}(Y_i \mid X_i; \mathcal{H}_2)\), giving the combined weight \(a_i=K_{h_1}(x-X_i)w_i\). The \( K_2 \) component assigns lower weights to responses with smaller estimated conditional density.

In finite samples, extremely small density bandwidths may reduce the contrast among plug-in density weights, while very large \( \mathcal{H}_2 \) makes the density estimator nearly constant across different \( Y_i \) and the estimator approaches standard LPR. An intermediate bandwidth \( \mathcal{H}_2 \) achieves a balance. Implementation-specific details are given in Section~\ref{S:Experiments and Implementation Notes}.

\subsection{Relationship to Kernel Methods and RKHS}
The RKHS interpretation is ancillary to the estimator but provides useful intuition for the density-based weights. Suppose the KDE smoothing kernel \(k\) is positive definite, with feature map \(\phi\) into an RKHS \(\mathcal H\), so that
\[
k(z,z')=\langle \phi(z),\phi(z')\rangle_{\mathcal H}.
\]
For a density-estimation sample \(\{Z_\ell\}_{\ell=1}^{M}\), the KDE at \(z\) can be written as
\[
\hat f(z)
=
\frac{1}{M}\sum_{\ell=1}^{M} k(z,Z_\ell)
=
\left\langle
\phi(z),
\frac{1}{M}\sum_{\ell=1}^{M}\phi(Z_\ell)
\right\rangle_{\mathcal H}.
\]
Thus, the KDE is the inner product between the feature representation of the evaluation point and the empirical mean feature embedding of the local data \citep{Muandet2017}. For the conditional-density weight
\[
\hat f_{Y\mid X}(y\mid x)=\hat f_{X,Y}(x,y)/\hat f_X(x),
\]
the numerator has this mean-embedding interpretation in the joint \((X,Y)\)-space, while the denominator normalizes by the predictor-space density.

From this perspective, the KDE acts as a point-to-set similarity score: it compares the feature representation of the evaluation point \(z\) with the empirical mean feature embedding of the local density-estimation sample.

The conditional-density weight can therefore be interpreted as a normalized point-to-set similarity in the joint \((X,Y)\)-space, normalized by the corresponding point-to-set similarity in predictor space.

This perspective suggests that the KDE score could in principle be replaced by other point-to-local-cloud similarity scores, such as robust feature-mean embeddings or other distributional similarity measures. We focus here on conditional-density and joint-density weights because they provide a direct probabilistic interpretation and yield an explicit oracle population objective.

This KDE mean-embedding interpretation is distinct from, but compatible with, the separable product kernel used for weighting. The product-kernel statement applies to the oracle score or to a plug-in score computed from a fixed density-estimation sample. If \(g(z)=\hat f_{Y\mid X}(y\mid x)\), then the conditional-density product kernel can be written as
\[
K_2(z,z')=g(z)g(z'),
\]
which is a symmetric rank-one positive semidefinite kernel. If \(K_1\) is also positive semidefinite, then \(K_1K_2\) is positive semidefinite by standard kernel closure properties. For the weighted least-squares estimator itself, however, positive semidefiniteness is not required; nonnegative weights are sufficient.

\section{Experiments and Implementation Notes}
\label{S:Experiments and Implementation Notes}
The proposed method (RSKLPR) is implemented in Python and published as an open source package \href{https://github.com/yaniv-shulman/rsklpr}{https://github.com/yaniv-shulman/rsklpr}. This section reports three complementary empirical studies. First, a public regression benchmark evaluates predictive performance under cross-validated hyperparameter selection. Second, a synthetic target-validation experiment tests whether the empirical behaviour agrees with the population-target analysis in Section~\ref{S:Properties}. Third, a controlled target-corruption experiment on the public Appliances dataset examines how the methods respond to zero-mean symmetric and asymmetric label corruption.

The experiments in this section are intended to evaluate the behavior of the proposed density-based local weighting mechanism relative to standard LOWESS and robust LOWESS under controlled and public benchmark settings. They are not intended to establish state-of-the-art performance for the underlying prediction tasks, nor to compare against the broad range of specialized models that may be better suited to a particular dataset. The focus is instead on whether RSKLPR preserves the local-regression structure while avoiding the degradation observed with residual-based robust LOWESS in the settings considered.

\subsection{Implementation Notes}
The implementation normalizes distances in each neighborhood to the range $[0, 1]$, consistent with the approach in \cite{cleveland79}, effectively making the bandwidth for $K_1$ adaptive. The population analysis above is formulated for an idealized kernel-weighted objective; the implementation uses a fixed-size nearest-neighbor neighborhood, together with distance normalization and kernel weighting, as a practical adaptive localization scheme. The predictor-distance kernel is specified separately for each experiment. For density estimation in $K_2$, a factorized multidimensional Kernel Density Estimator (KDE) with scaled Gaussian kernels was used \citep{Rosenblatt1956,Parzen1962}. As noted above, the finite-sample KDE is target-specific in the nearest-neighbor implementation. This self-inclusive plug-in KDE is evaluated on the same local neighborhood points used to estimate it; the resulting density weights are mean-normalized and clipped away from zero for numerical stability. With extremely small density bandwidths, each observation's self-kernel contribution can dominate its estimated density, which may reduce the contrast among density weights. Leave-one-out or regularized density weights are natural alternatives. Full finite-sample and asymptotic analysis of the nearest-neighbor localization and plug-in weighting is left for future work. The implementation supports multiple KDE bandwidth-selection rules, and the bandwidth rule is therefore treated as an experiment-level specification; the rules used in the reported experiments are given in the corresponding appendix sections. This factorization is a simplification that ignores potential covariance between predictors and response but is computationally efficient.

The experiments reported below are a curated subset of a broader exploratory study. Additional synthetic experiments, including homoscedastic and heteroscedastic Gaussian noise, asymmetric response distributions, sparse and dense sampling regimes, multivariate regression surfaces, neighborhood-sensitivity checks, and bootstrap-based uncertainty estimates, are available as notebooks in the \href{https://nbviewer.org/github/yaniv-shulman/rsklpr/tree/main/src/experiments}{project experiments directory}. These notebooks were used to guide the reported benchmark design, while the following sections focus on the experiments most directly tied to the estimator's empirical performance and population-target analysis.

\subsection{Public Regression Benchmark}
We evaluated the methods on the UCI Appliances Energy Prediction dataset \citep{UCIAppliancesEnergyPrediction,CandanedoFeldheimDeramaix2017}. This dataset contains sensor and weather measurements from a residential building, with the appliance energy use as the response.

The compared methods were standard LOWESS, robust LOWESS, and RSKLPR with the conditional-density kernel. All methods used local linear fits and tricube predictor-distance weights. Hyperparameters were selected by inner cross-validation within each training split. The final focused grid was chosen after an initial coarse sweep: standard LOWESS and RSKLPR used the Minkowski metric with $p=1$, while robust LOWESS used the Mahalanobis metric after a separate metric check. The neighborhood size was selected from $\{35,41,\ldots,119\}$ for each method and split. The full preprocessing, splitting, and hyperparameter protocol is given in~\ref{appendix:public_benchmark_protocol}.

\begin{table}[t]
\caption{Focused public benchmark results on the UCI Appliances Energy Prediction dataset. Values summarize five repeated 80/20 holdout splits after inner four-fold cross-validation. Lower RMSE, MAE, and median absolute error are better; Win is the fraction of held-out splits with the lowest RMSE.}
\centering
\scriptsize
\begin{tabular}{lrrrrrr}
\toprule
Method & Mean RMSE & Med. RMSE & RMSE SD & Mean MAE & Med. AE & Win \\
\midrule
RSKLPR & 74.878 & 75.298 & 1.553 & 36.968 & 12.945 & 0.60 \\
LOWESS & 75.137 & 74.943 & 1.793 & 37.530 & 13.398 & 0.40 \\
Robust LOWESS & 84.376 & 84.594 & 1.285 & 35.981 & 10.411 & 0.00 \\
\bottomrule
\end{tabular}
\label{tab:appliances_benchmark}
\end{table}

The results are shown in Table~\ref{tab:appliances_benchmark} and Figure~\ref{fig:appliances_benchmark}. RSKLPR achieved the lowest mean RMSE and won three of the five held-out splits, although the mean-RMSE margin over standard LOWESS was modest. The main comparison is with robust LOWESS, which applies residual-based robust reweighting. In this benchmark, robust LOWESS substantially underperforms both standard LOWESS and RSKLPR in RMSE, whereas RSKLPR preserves a robustness-motivated reweighting structure without incurring the same loss in predictive accuracy. The difference between RSKLPR and standard LOWESS is modest, so we interpret the result primarily as evidence that the proposed density-based robust weighting avoids the degradation observed for residual-based robust LOWESS.

\begin{figure}[t]
\caption{Focused public benchmark on the UCI Appliances Energy Prediction dataset. Left: mean held-out RMSE with one standard deviation over five repeated holdout splits. Right: split-wise held-out RMSE distribution. Hyperparameters were selected by inner four-fold cross-validation within each split.}
\centering
\begin{subfigure}[t]{0.48\textwidth}
\centering
\includegraphics[width=\textwidth]{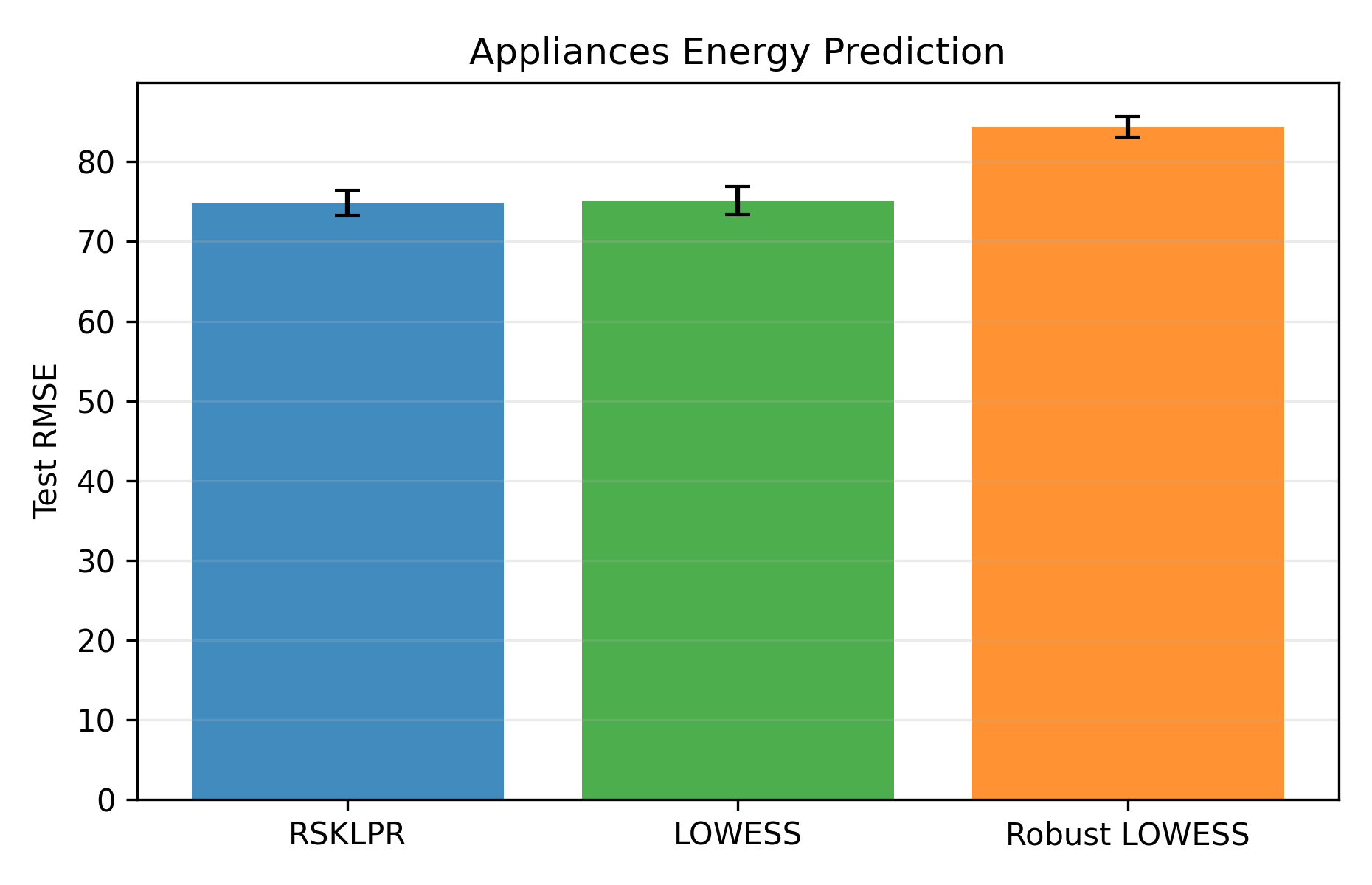}
\end{subfigure}
\hfill
\begin{subfigure}[t]{0.48\textwidth}
\centering
\includegraphics[width=\textwidth]{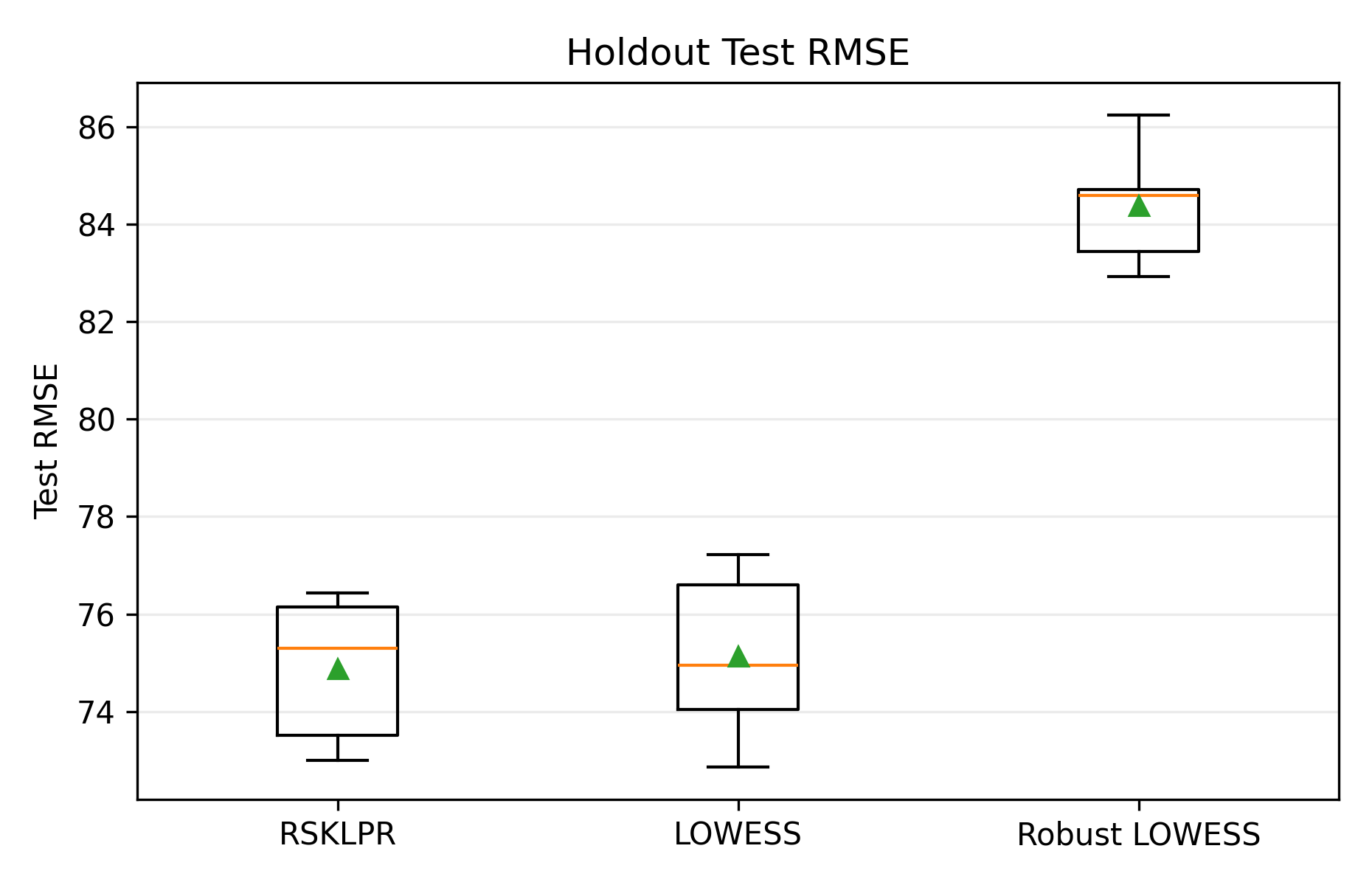}
\end{subfigure}
\label{fig:appliances_benchmark}
\end{figure}

\subsection{Synthetic Target-Validation Experiment}
The second experiment validates the theoretical target discussed in Section~\ref{S:Properties}. The purpose is not to claim broad benchmark superiority, but to test whether the empirical behaviour agrees with the population-target analysis: under symmetric conditional response distributions RSKLPR should target the conditional mean, while under asymmetric conditional response distributions it should exhibit the density-tilted bias described in Corollary~\ref{cor:population_target}.

Synthetic datasets were generated from a smooth one-dimensional regression function $m(x)$ with known ground truth; the precise data-generating mechanism is given in~\ref{appendix:experimental_hyperparameters}. Six conditional response distributions were considered: two symmetric distributions (Gaussian and a symmetric bimodal Gaussian mixture) and four asymmetric positive distributions (Exponential, Gamma, Log-normal, and Weibull). The symmetric bimodal distribution is included to test a non-Gaussian case where the conditional distribution remains symmetric about $m(x)$ but is not unimodal. Representative low- and high-density samples from selected distributions are shown in Figure~\ref{fig:density_example_samples}.

For each distribution and each training-set size \(T \in \{50, 75, 100, 150, 250, 500, 1000, 2000\}\), 50 Monte Carlo trials were generated. Each method was fitted on the sampled noisy training data and evaluated on a fixed clean grid of 300 points using RMSE against the true conditional mean $m(x)$. The plotted curves report the mean RMSE across trials, with shaded 95\% Monte Carlo intervals. The $x$-axis is shown on a logarithmic scale to make the low-density regime visible.

The methods compared are standard LOWESS, robust LOWESS (five robustness iterations), RSKLPR using the conditional-density kernel, and the joint-density RSKLPR variant described in~\ref{appendix:joint_density_kernel}. LOWESS is included primarily as a theoretical reference for the conditional mean under clean zero-mean sampling. Iterative robust LOWESS is the main robustness comparator.

All methods used the same local neighborhood size at each training-set size: approximately one quarter of the training sample, with a minimum of 15 neighbors and a maximum of 200. The LOWESS baselines used the corresponding local span, so each method was fitted with the same effective local sample size; robust LOWESS used five residual reweighting iterations. Additional implementation details are provided in~\ref{appendix:experimental_hyperparameters}.

Figure~\ref{fig:density_target_validation} shows the results. The symmetric cases behave as predicted. Under Gaussian noise, all methods converge toward the true conditional mean as the number of points increases. The symmetric bimodal case is more challenging at low data density, but remains a clean test of the symmetry condition: the response distribution is multimodal, yet centered on $m(x)$.

For the asymmetric distributions, standard LOWESS generally has the lowest RMSE against $m(x)$, which is expected because it directly targets the conditional mean under clean sampling. This curve should therefore be interpreted as a conditional-mean reference rather than as the primary robustness comparison. The main comparison is with iterative robust LOWESS. At smaller training sizes (\(T\le150\)), the RSKLPR variants have lower RMSE and narrower Monte Carlo intervals than robust LOWESS in all panels except the bimodal case. This suggests that iterative residual reweighting can introduce unnecessary bias when the asymmetry in the response distribution is genuine structure rather than contamination.

At larger sample sizes in the asymmetric settings, the RSKLPR curves flatten near the density-tilted reference rather than converging to the LOWESS conditional-mean target. This empirically supports the population-target analysis in Corollary~\ref{cor:population_target} and the exponential example in~\ref{appendix:asym_bias_example}. RSKLPR uses a single non-iterative density-reweighting step, making the associated bias more explicit and theoretically analyzable than the iterative residual reweighting in robust LOWESS.

\begin{figure}[t]
\caption{Empirical target validation under clean synthetic sampling. Each panel shows mean RMSE against the true conditional mean as a function of training-set size \(T\) over 50 Monte Carlo trials; shaded bands show 95\% Monte Carlo intervals. The dashed horizontal line shows the theoretical asymptotic RMSE implied by the oracle density-tilted population target and is a reference level rather than a bound. LOWESS is a conditional-mean reference baseline. In the symmetric Gaussian and bimodal cases, the reference level is zero and all methods trend toward the conditional-mean target. At smaller training sizes (\(T\le150\)), the RSKLPR variants have lower RMSE and narrower Monte Carlo intervals than iterative robust LOWESS in all panels except the bimodal case. In the asymmetric positive-response cases, LOWESS continues to move toward the conditional mean, while the RSKLPR conditional- and joint-density curves flatten at nonzero error levels near the oracle density-tilted reference, up to finite-sample variability.}
\centering
\includegraphics[width=1.0\textwidth]{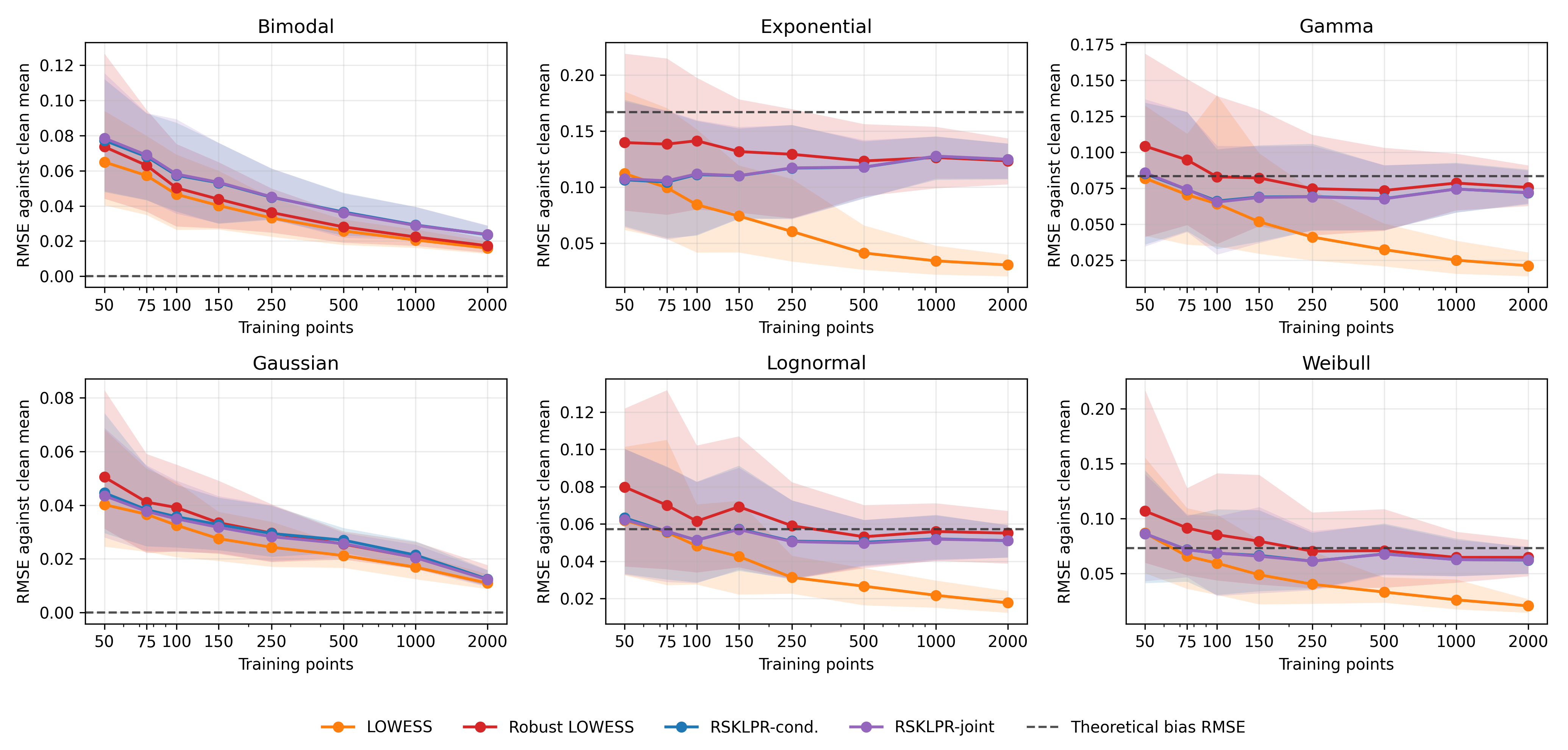}
\label{fig:density_target_validation}
\end{figure}

\subsection{Controlled Target-Corruption Experiment}
The public Appliances benchmark suggested that robust LOWESS and RSKLPR respond differently to high-response observations. To isolate this behaviour from the unknown noise process in the original dataset, we ran a controlled target-corruption experiment on the same UCI Appliances data. The clean response was retained as a reference target. For each outer training split, zero-mean corruption was added only to the training responses, while model selection and held-out evaluation were scored against the clean response. This protocol is therefore a recovery diagnostic rather than a deployment setting with unobserved clean validation labels. Full experimental details are given in~\ref{appendix:corrupted_appliances_protocol}.

Five corruption profiles were considered: Gaussian, Student-\(t(3)\), centered exponential, centered log-normal, and centered sparse one-sided contamination. Each sampled corruption vector was centered and standardized before being scaled by a fraction of the clean training-response standard deviation. Thus the asymmetric profiles are asymmetric in shape but not in mean. This distinction is important because standard LOWESS remains a useful conditional-mean diagnostic under zero-mean target noise.

Figure~\ref{fig:corrupted_appliances_diagnostic} summarizes the nonzero corruption levels from the coarse grid run. LOWESS remained stable under both symmetric and centered asymmetric corruption, indicating that the injected target outliers were largely averaged out by the local neighborhoods and did not create a severe leverage problem. The more informative comparison is therefore with iterative robust LOWESS. Bias is computed as the mean clean-target residual, \(n_{\mathrm{test}}^{-1}\sum_i(\hat y_i-y_i^{\mathrm{clean}})\); for each corruption profile, level, and method, this signed bias is averaged over splits, and the reported absolute bias values average its absolute value over the relevant nonzero corruption settings. Under centered asymmetric corruption, robust LOWESS had substantially larger absolute bias than RSKLPR (19.689 versus 0.262 on average) and substantially worse clean-target RMSE (85.154 versus 78.436). Under symmetric and heavy-tailed corruption the same pattern remained, though the robust LOWESS bias was less severe (11.318 versus 0.576).

These results suggest that, in this setting, iterative residual reweighting overreacts to asymmetric residual tails and down-weights genuine high-energy appliance observations, improving typical absolute errors but shifting the fitted function downward. RSKLPR does not show the same asymmetric residual-feedback bias: it preserves near-zero bias and clean-target RMSE while still providing a response based robust counterpart to LOWESS. This does not contradict the population-target analysis above, which concerns genuinely asymmetric conditional response distributions. Instead, it shows that for centered exogenous target corruption, the density-weighted estimator can behave much closer to the conditional-mean target than the worst-case asymptotic bias discussion might suggest.

\begin{figure}[t]
\caption{Controlled target corruption on the UCI Appliances Energy Prediction dataset. Results average over nonzero corruption levels and five repeated holdout splits. Symmetric/heavy-tail includes Gaussian and Student-\(t(3)\) corruption; centered asymmetric includes centered exponential, centered log-normal, and centered sparse one-sided contamination. The absolute-bias axis is shown on a logarithmic scale.}
\centering
\includegraphics[width=0.86\textwidth]{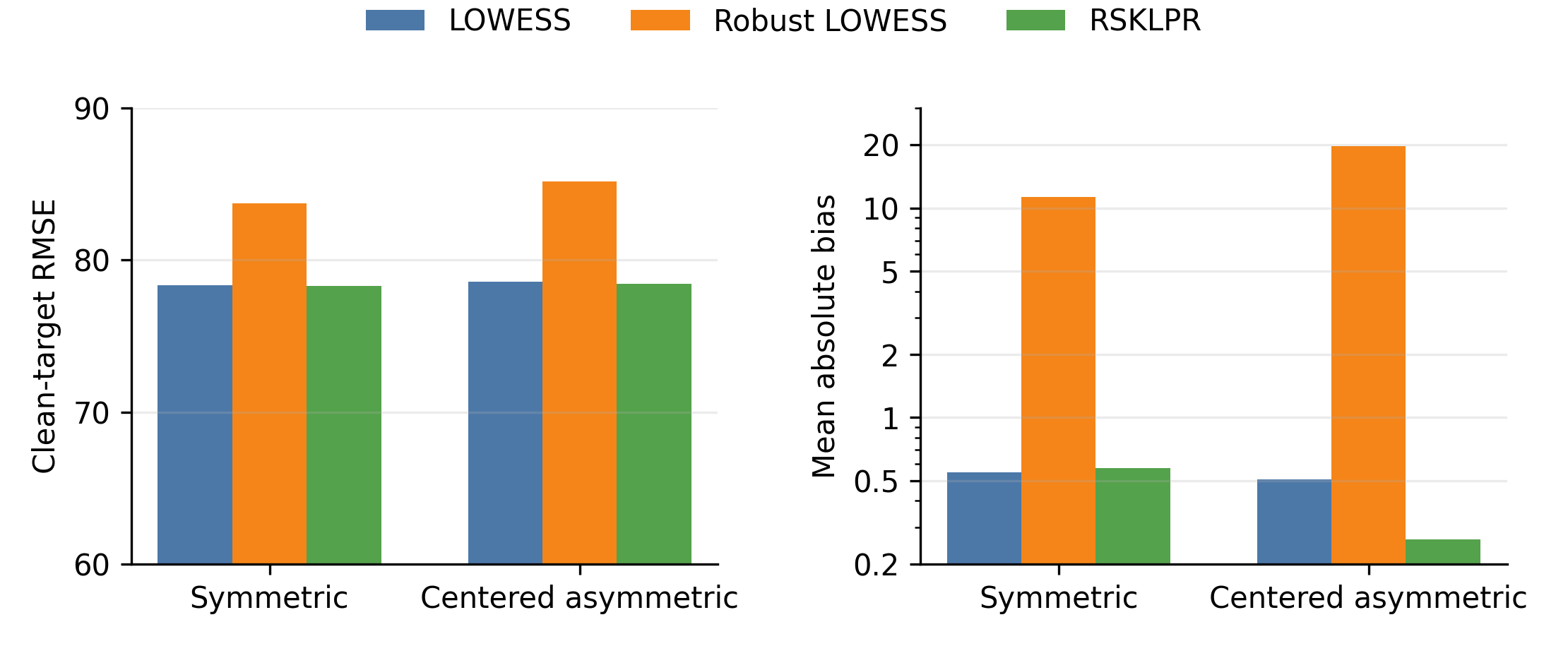}
\label{fig:corrupted_appliances_diagnostic}
\end{figure}

\section{Future Work and Research Directions}
\label{S:Future work and research directions}

This work introduces a new robust variant of Local Polynomial Regression (LPR), opening several avenues for further exploration and refinement. Since the proposed method generalizes the traditional LPR, there are opportunities to replace certain standard components in equation \eqref{loss_rsk} with more robust alternatives. These could include approaches such as robust methods for bandwidth selection or substituting the conventional quadratic residual function with alternatives better suited for handling outliers.

An important research direction is to explore adaptive bandwidth selection strategies that respond dynamically to local data density. In regions where data are sparse, the bandwidth in \(K_2\) could be fine-tuned to maintain robust down-weighting of potential outliers. Conversely, in denser regions, broader bandwidths may be adopted, causing the estimator to behave more like standard LPR and reduce any bias introduced by the robust weighting. Incorporating such adaptive bandwidths could further enhance the method’s overall performance and flexibility.

Additionally, further development of this framework may involve exploring different kernel functions and assessing how robust density estimators influence overall performance. Extending the method within the RKHS framework presents another valuable direction. This could allow for the introduction of a regularization term in the loss function, enhancing control over estimator smoothness and mitigating the risk of overfitting. Through these future directions, the robustness and adaptability of the proposed method could be substantially advanced.

\clearpage

\bibliographystyle{elsarticle-harv}
\bibliography{refs}

@article{cleveland79,
  author  = {Cleveland, William S.},
  title   = {Robust Locally Weighted Regression and Smoothing Scatterplots},
  journal = {Journal of the American Statistical Association},
  volume  = {74},
  number  = {368},
  pages   = {829--836},
  year    = {1979},
  doi     = {10.1080/01621459.1979.10481038}
}

@article{cleveland_devlin88,
 ISSN = {01621459},
 URL = {http://www.jstor.org/stable/2289282},
 author = {William S. Cleveland and Susan J. Devlin},
 journal = {Journal of the American Statistical Association},
 number = {403},
 pages = {596--610},
 publisher = {[American Statistical Association, Taylor & Francis, Ltd.]},
 title = {Locally Weighted Regression: An Approach to Regression Analysis by Local Fitting},
 urldate = {2023-05-19},
 volume = {83},
 year = {1988},
 doi = {10.1080/01621459.1988.10478639}
}

@article{Nadaraya1964OnER,
  title={On Estimating Regression},
  author={Nadaraya, E. A.},
  journal={Theory of Probability and Its Applications},
  year={1964},
  volume={9},
  pages={141-142},
  doi={10.1137/1109020},
  url={https://api.semanticscholar.org/CorpusID:120067924}
}

@article{Watson1964SmoothRA,
  title   = {Smooth Regression Analysis},
  author  = {Watson, Geoffrey S.},
  journal = {Sankhy{\=a}: The Indian Journal of Statistics, Series A},
  volume  = {26},
  pages   = {359--372},
  year    = {1964}
}

@article{10.1002/wics.1492,
author = {\v{C}\'{\i}\v{z}ek, Pavel and Sad\i{}ko\u{g}lu, Serhan},
title = {Robust Nonparametric Regression: A Review},
year = {2020},
issue_date = {May/June 2020},
publisher = {John Wiley \& Sons, Inc.},
address = {USA},
volume = {12},
number = {3},
issn = {1939-5108},
url = {https://doi.org/10.1002/wics.1492},
doi = {10.1002/wics.1492},
abstract = {Nonparametric regression methods provide an alternative approach to parametric estimation that requires only weak identification assumptions and thus minimizes the risk of model misspecification. In this article, we survey some nonparametric regression techniques, with an emphasis on kernel‐based estimation, that are additionally robust to atypical and outlying observations. While the main focus lies on robust regression estimation, robust bandwidth selection and conditional scale estimation are discussed as well. Robust estimation in popular nonparametric models such as additive and varying‐coefficient models is summarized too. The performance of the main methods is demonstrated on a real dataset.This article is categorized under: Statistical and Graphical Methods of Data Analysis &gt; Robust MethodsStatistical and Graphical Methods of Data Analysis &gt; Nonparametric MethodsEffects of outliers on robust and non‐robust nonparametric estimators. image image},
journal = {WIREs Comput. Stat.},
month = {apr},
numpages = {16},
keywords = {nonparametric regression, robust estimation, outliers}
}

@article{Fan1993,
  title   = {Local Linear Regression Smoothers and Their Minimax Efficiencies},
  author  = {Fan, Jianqing},
  journal = {The Annals of Statistics},
  volume  = {21},
  number  = {1},
  pages   = {196--216},
  year    = {1993},
  doi     = {10.1214/aos/1176349022}
}

@article{SALIBIANBARRERA2023,
title = {Robust nonparametric regression: Review and practical considerations},
journal = {Econometrics and Statistics},
year = {2023},
issn = {2452-3062},
doi = {10.1016/j.ecosta.2023.04.004},
url = {https://www.sciencedirect.com/science/article/pii/S2452306223000308},
author = {Matias Salibian-Barrera},
keywords = {Nonparametric regression, Robust statistics, Statistical learning},
abstract = {Nonparametric regression models offer a way to understand and quantify relationships between variables without having to identify an appropriate family of possible regression functions. Although many estimation methods for these models have been proposed in the literature, most of them can be highly sensitive to the presence of a small proportion of atypical observations in the training set. A review of outlier robust estimation methods for nonparametric regression models is provided, paying particular attention to practical considerations. Since outliers can also influence negatively the regression estimator by affecting the selection of bandwidths or smoothing parameters, a discussion of robust alternatives for this task is also included. Using many of the “classical” nonparametric regression estimators (and their robust counterparts) can be very challenging in settings with a moderate or large number of explanatory variables, so recently proposed robust nonparametric regression methods that scale well with a growing number of covariates are also discussed.}
}

@book{Garcia-Portugues2023,
    title        = {Notes for Nonparametric Statistics},
    author       = {Garc\'ia-Portugu\'es, E.},
    year         = {2023},
    note         = {Version 6.9.0. ISBN 978-84-09-29537-1},
    url          = {https://bookdown.org/egarpor/NP-UC3M/}
}

@book{Fan1996Local,
  added-at = {2016-02-16T18:28:35.000+0100},
  address = {Boca Raton, Fla.},
  author = {Fan, J. and Gijbels, I.},
  biburl = {https://www.bibsonomy.org/bibtex/2354d97776de31466f25794eb8cf20c58/krassi},
  interhash = {de68bea35adadb13da464f65107efce4},
  intrahash = {354d97776de31466f25794eb8cf20c58},
  isbn = {0412983214 9780412983214},
  keywords = {polynomial regression},
  publisher = {Chapman \& Hall/CRC},
  refid = {34917116},
  timestamp = {2016-02-16T18:28:35.000+0100},
  title = {Local polynomial modelling and its applications},
  url = {http://www.worldcat.org/search?qt=worldcat_org_all&q=0412983214},
  year = 1996
}

@article{Rosenblatt1956,
  title   = {Remarks on Some Nonparametric Estimates of a Density Function},
  author  = {Rosenblatt, Murray},
  journal = {The Annals of Mathematical Statistics},
  volume  = {27},
  number  = {3},
  pages   = {832--837},
  year    = {1956},
  doi     = {10.1214/aoms/1177728190}
}

@article{Parzen1962,
  title   = {On Estimation of a Probability Density Function and Mode},
  author  = {Parzen, Emanuel},
  journal = {The Annals of Mathematical Statistics},
  volume  = {33},
  number  = {3},
  pages   = {1065--1076},
  year    = {1962},
  doi     = {10.1214/aoms/1177704472}
}

@misc{UCIAppliancesEnergyPrediction,
  title        = {Appliances Energy Prediction},
  author       = {Candanedo, Luis},
  year         = {2017},
  publisher    = {{UCI Machine Learning Repository}},
  howpublished = {UCI Machine Learning Repository, \url{https://doi.org/10.24432/C5VC8G}},
  doi          = {10.24432/C5VC8G}
}

@article{CandanedoFeldheimDeramaix2017,
  author  = {Candanedo, Luis M. and Feldheim, Veronique and Deramaix, Dominique},
  title   = {Data driven prediction models of energy use of appliances in a low-energy house},
  journal = {Energy and Buildings},
  volume  = {140},
  pages   = {81--97},
  year    = {2017},
  doi     = {10.1016/j.enbuild.2017.01.083}
}

@article{HardleGasser1984,
  title   = {Robust Non-parametric Function Fitting},
  author  = {H{\"a}rdle, Wolfgang and Gasser, Theo},
  journal = {Journal of the Royal Statistical Society. Series B (Methodological)},
  volume  = {46},
  number  = {1},
  pages   = {42--51},
  year    = {1984}
}

@article{ChichignoudLederer2014,
  title   = {A Robust, Adaptive M-estimator for Pointwise Estimation in Heteroscedastic Regression},
  author  = {Chichignoud, Micha{\"e}l and Lederer, Johannes},
  journal = {Bernoulli},
  volume  = {20},
  number  = {3},
  pages   = {1560--1599},
  year    = {2014},
  doi     = {10.3150/13-BEJ533}
}

@article{YaoLindsayLi2012,
  title   = {Local Modal Regression},
  author  = {Yao, Weixin and Lindsay, Bruce G. and Li, Runze},
  journal = {Journal of Nonparametric Statistics},
  volume  = {24},
  number  = {3},
  pages   = {647--663},
  year    = {2012},
  doi     = {10.1080/10485252.2012.678848}
}

@article{HyndmanBashtannykGrunwald1996,
  title   = {Estimating and Visualizing Conditional Densities},
  author  = {Hyndman, Rob J. and Bashtannyk, David M. and Grunwald, Guus K.},
  journal = {Journal of Computational and Graphical Statistics},
  volume  = {5},
  number  = {4},
  pages   = {315--336},
  year    = {1996},
  doi     = {10.1080/10618600.1996.10474715}
}

@article{BashtannykHyndman2001,
  title   = {Bandwidth Selection for Kernel Conditional Density Estimation},
  author  = {Bashtannyk, Dmitri M. and Hyndman, Rob J.},
  journal = {Computational Statistics \& Data Analysis},
  volume  = {36},
  number  = {3},
  pages   = {279--298},
  year    = {2001},
  doi     = {10.1016/S0167-9473(00)00046-3}
}

@article{HallRacineLi2004,
  title   = {Cross-validation and the Estimation of Conditional Probability Densities},
  author  = {Hall, Peter and Racine, Jeffrey and Li, Qi},
  journal = {Journal of the American Statistical Association},
  volume  = {99},
  number  = {468},
  pages   = {1015--1026},
  year    = {2004},
  doi     = {10.1198/016214504000000548}
}

@article{CattaneoChandakJanssonMa2024,
  title   = {Boundary Adaptive Local Polynomial Conditional Density Estimators},
  author  = {Cattaneo, Matias D. and Chandak, Rajita and Jansson, Michael and Ma, Xinwei},
  journal = {Bernoulli},
  volume  = {30},
  number  = {4},
  pages   = {3193--3223},
  year    = {2024},
  doi     = {10.3150/23-BEJ1711}
}

@article{BasuHarrisHjortJones1998,
  title   = {Robust and Efficient Estimation by Minimising a Density Power Divergence},
  author  = {Basu, Ayanendranath and Harris, Ian R. and Hjort, Nils L. and Jones, M. C.},
  journal = {Biometrika},
  volume  = {85},
  number  = {3},
  pages   = {549--559},
  year    = {1998},
  doi     = {10.1093/biomet/85.3.549}
}

@article{KimScott2012,
  title   = {Robust Kernel Density Estimation},
  author  = {Kim, JooSeuk and Scott, Clayton D.},
  journal = {Journal of Machine Learning Research},
  volume  = {13},
  pages   = {2529--2565},
  year    = {2012}
}

@book{Silverman1986,
  title     = {Density Estimation for Statistics and Data Analysis},
  author    = {Silverman, Bernard W.},
  publisher = {Chapman and Hall},
  year      = {1986}
}

@article{Muandet2017,
  title   = {Kernel Mean Embedding of Distributions: A Review and Beyond},
  author  = {Muandet, Krikamol and Fukumizu, Kenji and Sriperumbudur, Bharath and Sch{\"o}lkopf, Bernhard},
  journal = {Foundations and Trends in Machine Learning},
  volume  = {10},
  number  = {1--2},
  pages   = {1--141},
  year    = {2017},
  doi     = {10.1561/2200000060}
}

\clearpage
\appendix
\section{Proofs}
\label{appendix:proofs}

\subsection{Proof of Lemma~\ref{lemma:invariance}}
\label{appendix:proof_invariance}
\begin{proof}
The term \( c(x,y) \), which is positive and constant with respect to the summation index \(i\), is a scalar factor multiplying the entire sum. Assumption~\ref{ass:wls_uniqueness} makes the weighted least-squares objective strictly convex in the local polynomial coefficients, so the argmin is a singleton. Since scaling an objective by a positive constant preserves this singleton argmin, the minimizer \(\hat\beta(x,y)\) is independent of \(y\) and can be denoted by \(\hat\beta(x)\).
\end{proof}

\subsection{Proof of Proposition~\ref{prop:reweighting}}
\label{appendix:proof_reweighting}
\begin{proof}
Let $w_i = \hat f_{Y\mid X}(Y_i\mid X_i)$ and \(a_i=K_{h_1}(x-X_i)w_i\). Under Assumption~\ref{ass:wls_uniqueness}, the weighted least-squares objective is strictly convex in the local polynomial coefficients, so the minimizer is unique. Since \(\sum_{i=1}^N w_i>0\), dividing the objective by this positive sum does not change the minimizer:
\begin{align}
\hat{\beta}(x) &= \arg\min_{\beta(x)} \frac{1}{\sum_{k=1}^N w_k} \sum_{i=1}^N w_i \left[ \left(Y_i-\sum_{j=0}^{p}\beta_j(x)(X_i - x)^j\right)^2 K_{h_1}(x-X_i) \right].
\end{align}
The term in the square brackets is the $i$-th term of the standard LPR loss function from Equation \eqref{loss_lpr}. The expression is therefore a weighted arithmetic mean of these standard LPR terms. This interpretation makes it clear that points with a low estimated conditional density (i.e., response outliers) are down-weighted in a single, non-iterative step. Note that if a kernel with bounded support (e.g., Epanechnikov) is used for density estimation, it is theoretically possible for all weights $w_i$ in a neighborhood to be zero, although this is not an issue with unbounded kernels like the Gaussian.
\end{proof}

\subsection{Proof of Proposition~\ref{prop:population_objective}}
\label{appendix:proof_population_objective}
\begin{proof}
The weighted moments used in the normal-equation calculation are
\begin{align}
M_0(x)
&=\iint
K_{h_1}(x-u)\,
w(u,v)\,
f_{X,Y}(u,v)\,\mathrm d v\,\mathrm d u, \label{eq:proof_M0}\\
M_1(x)
&=\iint
(u-x)K_{h_1}(x-u)\,
w(u,v)\,
f_{X,Y}(u,v)\,\mathrm d v\,\mathrm d u, \label{eq:M1_centering}\\
R_0(x)
&=\iint v\,
K_{h_1}(x-u)\,
w(u,v)\,
f_{X,Y}(u,v)\,\mathrm d v\,\mathrm d u. \label{eq:proof_R0}
\end{align}

For $p=0$, the objective is
\[
\mathcal J(x;\beta_0)
=\iint
(v-\beta_0)^2
K_{h_1}(x-u)\,
w(u,v)\,
f_{X,Y}(u,v)\,\mathrm d v\,\mathrm d u.
\]
Differentiating with respect to $\beta_0$ and setting the derivative to zero gives
\[
0
=-2\iint
(v-\beta_0^\star)
K_{h_1}(x-u)\,
w(u,v)\,
f_{X,Y}(u,v)\,\mathrm d v\,\mathrm d u
=-2\{R_0(x)-\beta_0^\star(x)M_0(x)\}.
\]
Thus $\beta_0^\star(x)=R_0(x)/M_0(x)$ whenever $M_0(x)>0$.

For $p=1$, write $g_x(u;\beta)=\beta_0+\beta_1^\top(u-x)$. Differentiating with respect to the intercept gives the first normal equation
\begin{equation}
\beta_0^\star(x)M_0(x)+(\beta_1^\star(x))^\top M_1(x)=R_0(x). \label{eq:first_normal_equation}
\end{equation}
When $M_1(x)=0$, Equation \eqref{eq:first_normal_equation} reduces to $\beta_0^\star(x)=R_0(x)/M_0(x)$. When $M_1(x)\neq0$, the local-linear intercept instead contains the correction term
\[
\beta_0^\star(x)
=\frac{R_0(x)-(\beta_1^\star(x))^\top M_1(x)}{M_0(x)}
=\frac{R_0(x)}{M_0(x)}
-\frac{(\beta_1^\star(x))^\top M_1(x)}{M_0(x)}.
\]
\end{proof}

\subsection{Proof of Proposition~\ref{prop:conditional_density_asymptotic_target}}
\label{appendix:proof_conditional_density_asymptotic_target}
\begin{proof}
With the basis \(r_p\) defined in Section~\ref{S:Local Polynomial Regression}, write \(g_x(u;\beta)=r_p(u-x)^\top\beta\). With the oracle conditional-density
weight, \(f_{X,Y}(u,v)=f_{Y\mid X}(v\mid u)f_X(u)\) gives
\[
\mathcal J(x;\beta)
=\int K_{h_1}(x-u)f_X(u)
\left\{\int
\bigl(v-r_p(u-x)^\top\beta\bigr)^2
[f_{Y\mid X}(v\mid u)]^2\,\mathrm d v\right\}\mathrm d u.
\]
For fixed \(u\), the inner integral expands as
\[
\int
\bigl(v-r_p(u-x)^\top\beta\bigr)^2
[f_{Y\mid X}(v\mid u)]^2\,\mathrm d v
=
C(u)\bigl(\mu'(u)-r_p(u-x)^\top\beta\bigr)^2
+B(u),
\]
where
\[
B(u)=
\int
\bigl(v-\mu'(u)\bigr)^2
[f_{Y\mid X}(v\mid u)]^2\,\mathrm d v
\]
does not depend on \(\beta\). Therefore minimizing \(\mathcal J(x;\beta)\)
is equivalent, up to the \(\beta\)-independent term
\(\int K_{h_1}(x-u)f_X(u)B(u)\,\mathrm d u\), to minimizing
\[
\int
K_{h_1}(x-u)
\bigl(\mu'(u)-r_p(u-x)^\top\beta\bigr)^2
C(u)f_X(u)\,\mathrm d u.
\]
This is the ordinary population local-polynomial criterion for the regression
function \(\mu'(u)\) with design weight \(C(u)f_X(u)\). Under the stated
local-polynomial assumptions, the weighted kernel moment matrix converges after
rescaling to \(C(x)f_X(x)\) times the limiting kernel moment matrix, and the
smoothness of \(\mu'\) near \(x\) yields the standard local-polynomial
population expansion. Hence the intercept of the population fit converges to
\(\mu'(x)\) as \(h_1\to0\).
\end{proof}

\section{Experimental Protocols}

\subsection{Public Benchmark Protocol}
\label{appendix:public_benchmark_protocol}
The public benchmark in Section~\ref{S:Experiments and Implementation Notes} used the UCI Appliances Energy Prediction dataset, available from the \href{https://archive.ics.uci.edu/dataset/374/appliances+energy+prediction}{UCI Machine Learning Repository}. The response was appliance energy use. The timestamp was converted to cyclic hour-of-day and day-of-week features, after which the original timestamp was dropped. The two random variables included in the dataset (\texttt{rv1} and \texttt{rv2}) were also dropped. Within each outer training split, the eight predictors with largest absolute Pearson correlation with the response were selected using training data only; this avoids test-set leakage. The selected features were stable across splits and consisted of the hour features, lights, outdoor humidity, selected indoor/outdoor temperatures, and indoor humidity.

The evaluation used five repeated holdout splits with an 80/20 train/test split. Each split contained 15,788 training observations and 3,947 test observations. Hyperparameters were selected by four-fold cross-validation on the training portion of each split, using RMSE as the selection criterion. The selected estimator was then refit on the full outer training split before held-out test evaluation. All numerical predictors were standardized using training-split statistics before fitting.

The benchmark proceeded in two stages. A coarse grid first searched over predictor kernels, distance metrics, response-kernel choices, and a broad set of neighborhood sizes. This selected the tricube predictor-distance kernel, Minkowski distance with $p=1$ for standard LOWESS and RSKLPR, and the conditional-density kernel (\texttt{kr=conden}) for RSKLPR. Robust LOWESS was then checked over the same focused neighborhood grid with Minkowski $p=1$, Minkowski $p=2$, and Mahalanobis distance; Mahalanobis was selected consistently in the observed splits. The final focused benchmark therefore used tricube predictor-distance weights for all methods; Minkowski $p=1$ for LOWESS and RSKLPR; Mahalanobis distance for robust LOWESS; no residual reweighting for LOWESS; three bisquare residual-reweighting iterations for robust LOWESS; and the conditional-density kernel for RSKLPR. For all three methods, the final neighborhood grid was $35,41,\ldots,119$ with step size 6.

The LOWESS baseline is a multivariate local linear regression baseline with tricube distance weights and no residual or response-density reweighting. Robust LOWESS uses the same local linear model and tricube distance weighting, followed by iterative bisquare residual reweighting. The RSKLPR model instead performs a single non-iterative response-density reweighting step using the conditional-density kernel, with normal-reference bandwidths for the response-density KDE \citep{Silverman1986}.

\subsection{Synthetic Target-Validation Protocol}
\label{appendix:experimental_hyperparameters}
The synthetic target-validation experiment in Section~\ref{S:Experiments and Implementation Notes} was run with fixed hyperparameters chosen to make the comparison interpretable rather than to tune each method separately. For training size \(T\), predictors were generated on the domain \(x\in[0,1]\) from an equally spaced grid with independent Gaussian jitter \(N(0,(0.15/T)^2)\), then clipped to \([0,1]\) and sorted. The clean regression function was
\[
m(x)
=
\sqrt{\left|x^3-\frac{4}{3}x^4\right|}
+0.1x\sin^2(3\pi x)
-\min_{t\in[0,1]}\left\{
\sqrt{\left|t^3-\frac{4}{3}t^4\right|}
+0.1t\sin^2(3\pi t)
\right\}
+0.1,
\]
so \(m(x)>0\) on the simulation domain. This positivity is required for the positive asymmetric response distributions, which were parameterized to have conditional mean \(m(x)\): Exponential responses used scale \(m(x)\); Gamma responses used shape \(2\) and scale \(m(x)/2\); Log-normal responses used \(\sigma=0.5\) and \(\mu(x)=\log m(x)-\sigma^2/2\); and Weibull responses used shape \(k=1.5\) and scale \(m(x)/\Gamma(1+1/k)\). The Gaussian response had mean \(m(x)\) with heteroscedastic scale proportional to the response range, and the bimodal response was a symmetric two-component perturbation around \(m(x)\).

All methods used local linear fits ($p=1$) and the same neighborhood size $N_{\mathrm{loc}}$ for a given training-set size. The resulting neighborhood sizes were
\[
\begin{array}{c|cccccccc}
T & 50 & 75 & 100 & 150 & 250 & 500 & 1000 & 2000 \\
\hline
N_{\mathrm{loc}} & 15 & 19 & 25 & 38 & 62 & 125 & 200 & 200
\end{array}
\]
where the final two entries reflect the maximum-neighborhood cap.

The LOWESS and robust LOWESS baselines were computed in one dimension with the same effective local sample size as RSKLPR. Standard LOWESS used the tricube distance kernel with no residual reweighting; robust LOWESS used the same local span with five bisquare residual-reweighting iterations.

The RSKLPR variants used the same $N_{\mathrm{loc}}$ and local linear degree. The predictor-distance kernel $K_1$ was the normalized Laplacian kernel applied after scaling neighborhood distances to $[0,1]$. The conditional-density and joint-density response kernels used the normal-reference bandwidth rule in the local KDE. No cross-validation or distribution-specific hyperparameter tuning was used in this synthetic validation experiment.

The dashed reference line in Figure~\ref{fig:density_target_validation} is the asymptotic RMSE against the true conditional mean implied by Corollary~\ref{cor:population_target}. For each response distribution, the squared-density target can be written as $\rho m(x)$, where $m(x)$ is the clean conditional mean. The plotted value is therefore $|1-\rho|\{\int m(x)^2 dx\}^{1/2}$, evaluated on the same clean grid as the experiment. The ratios are $\rho=1$ for the Gaussian and symmetric bimodal cases, $\rho=1/2$ for the Exponential case, $\rho=3/4$ for the Gamma case with shape 2, $\rho=\exp(-3\sigma^2/4)$ for the Log-normal case with $\sigma=0.5$, and $\rho=2^{-1/k}/\{\Gamma(2-1/k)\Gamma(1+1/k)\}$ for the Weibull case with shape $k=1.5$.

\begin{figure}[p]
\caption{Representative synthetic target-validation samples for selected response distributions. The black curve is the true conditional mean $m(x)$ and the points are sampled training responses. The left column shows the low-density regime (\(T=50\)), while the right column shows the high-density regime (\(T=2000\)). The Exponential and Log-normal cases illustrate the asymmetric positive response distributions used to test the density-tilted population target.}
\centering
\includegraphics[width=1.0\textwidth]{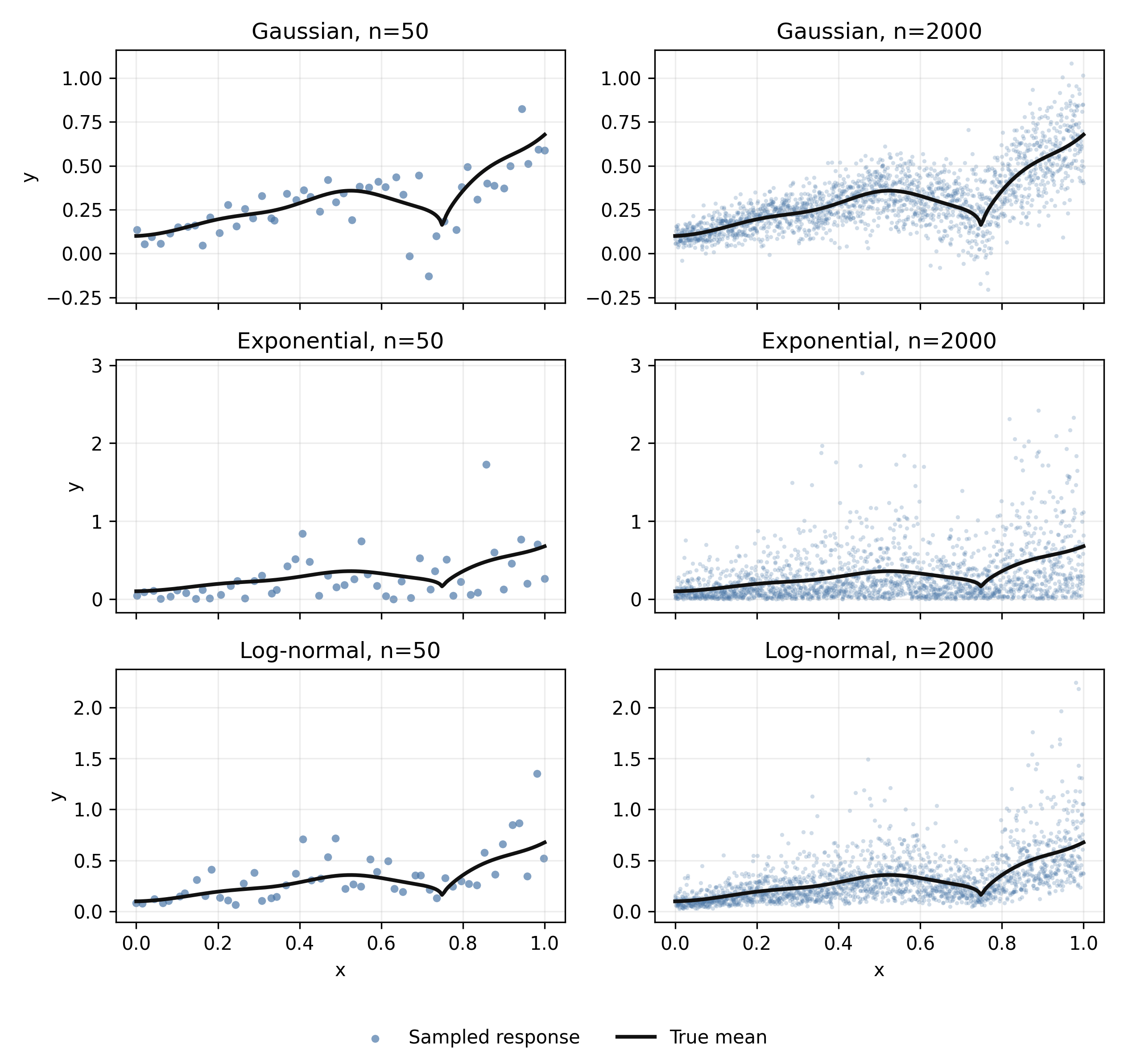}
\label{fig:density_example_samples}
\end{figure}

\subsection{Controlled Target-Corruption Protocol}
\label{appendix:corrupted_appliances_protocol}
The controlled target-corruption experiment used the same UCI Appliances preprocessing as Appendix~\ref{appendix:public_benchmark_protocol}: cyclic time features were added, \texttt{rv1} and \texttt{rv2} were dropped, the eight predictors with largest absolute training-split Pearson correlation with the clean response were selected in each outer split, and predictors were standardized using training-split statistics. The evaluation again used five repeated 80/20 holdout splits with four-fold inner cross-validation.

The corruption was applied only to the outer training response. Let \(y_i\) denote the clean training response and let \(s_y\) be its training-split standard deviation. For each noise profile, an i.i.d. noise vector \(z_i\) was sampled, centered by subtracting its empirical mean, and standardized by its empirical standard deviation. The corrupted training response was then
\[
 y_i^{\mathrm{corr}} = y_i + \alpha s_y z_i,
\]
where \(\alpha \in \{0,0.1,0.25,0.5,1.0\}\). The noise profiles were Gaussian, Student-\(t(3)\), exponential, log-normal, and sparse one-sided contamination. The contamination profile used small Gaussian background noise and added exponential positive shocks to 5\% of observations before centering and standardization. Consequently, the exponential, log-normal, and contamination settings are centered asymmetric corruption profiles rather than positive-mean corruption processes.

Models were fitted on \(y^{\mathrm{corr}}\), but both inner-CV model selection and outer held-out evaluation were scored against the original clean response. This clean-target scoring is intentionally an oracle-style recovery benchmark: it asks how each method behaves when the training labels are corrupted by a known zero-mean process while the desired target remains the clean conditional mean.

The reported run used a coarse hyperparameter grid. All methods used local linear fits and tricube predictor-distance weights. Standard LOWESS used \(\texttt{kr=none}\); RSKLPR searched over the conditional-density and joint-density kernels, \(\texttt{kr} \in \{\texttt{conden},\texttt{joint}\}\); and robust LOWESS used three bisquare residual-reweighting iterations. For standard LOWESS and RSKLPR, the distance metric was selected from Minkowski \(p=1\), Minkowski \(p=2\), and Mahalanobis distance. Robust LOWESS used the same metric choices. The neighborhood size was selected from \(\{31,63,95,127\}\). For each corruption type, level, split, and method, the selected estimator was refit on the full corrupted outer training response and evaluated against the clean held-out response.

\section{Joint Density Kernel}
\label{appendix:joint_density_kernel}
An alternative kernel for $K_2$ can be defined that is proportional to the joint distribution of the random pair. This could be useful, for example, to also down-weight high-leverage points in the predictor space.
\begin{align}
K_2 \left((x,y),(x', y') ; \mathcal{H}_2 \right) = \hat{f}(x,y ; \mathcal{H}_2) \hat{f}(x',y'; \mathcal{H}_2)
\end{align}
where the joint density can be estimated using the Parzen-Rosenblatt window estimator \citep{Rosenblatt1956,Parzen1962}. This choice also satisfies the conditions of Lemma \ref{lemma:invariance}, with $c(x,y)=\hat{f}(x,y; \mathcal{H}_2)$ and $w(X_i, Y_i) = \hat{f}(X_i, Y_i; \mathcal{H}_2)$. The simplified empirical objective function for a point $(X_i, Y_i)$ in the neighborhood becomes:
\[
\tilde{\mathcal{L}}(x)
=\sum_{i=1}^{N}
\Bigl(Y_i-\sum_{j=0}^{p}\beta_j(x)(X_i - x)^j\Bigr)^2
K_{h_1}(x-X_i)\,\hat{f}(X_i, Y_i; \mathcal{H}_2).
\]
This formulation weights each point $(X_i, Y_i)$ by its estimated joint density, in addition to the standard distance-based weight $K_{h_1}(x-X_i)$.

The mechanism by which this kernel provides robustness becomes clearer when we consider its population-level objective. The objective function involves an integral term weighted by $[f_{X,Y}(X,Y)]^2$. We can decompose this squared joint density:
\[
[f_{X,Y}(X,Y)]^2 = [f_{Y\mid X}(Y\mid X) \cdot f_X(X)]^2 = [f_{Y\mid X}(Y\mid X)]^2 \cdot [f_X(X)]^2.
\]
This decomposition reveals a dual weighting mechanism. The \textbf{$[f_{Y\mid X}(Y\mid X)]^2$} term provides robustness to outliers in the response variable, operating identically to the conditional density kernel discussed in the main paper. Simultaneously, the \textbf{$[f_X(X)]^2$} term directly addresses high-leverage points. Observations $X$ that lie in low-density regions of the predictor space will have a small $f_X(X)$ value, and this effect is amplified by the squaring.

Therefore, the joint density kernel explicitly down-weights points that are unusual in either the response space (outliers) or the predictor space (high-leverage points). This provides a clear theoretical underpinning for its use in settings where both types of robust treatment are desired. A full investigation of its properties is left for future work.

\section{Asymptotic Bias Example with an Exponential Conditional Distribution}
\label{appendix:asym_bias_example}

This appendix illustrates how asymmetry in the conditional distribution \( f(Y \mid X) \) can introduce asymptotic bias in the proposed estimator. The focus is on an exponential distribution.

Suppose that for each fixed \(X\), the conditional distribution \( f(Y \mid X)\) follows an exponential distribution with rate parameter \(\lambda(X)\):
\[
 f(Y \mid X) \;=\; \lambda(X)\,\exp\!\bigl(-\lambda(X)\,Y\bigr),
 \quad Y \ge 0,
\]
so that the true regression function is
\[
 m(X) \;=\; \mathbb{E}[Y \,\mid\, X] \;=\; \frac{1}{\lambda(X)}.
\]
When this density is squared, we obtain
\[
 [f(Y \mid X)]^2 
 \;=\; [\lambda(X)]^2 
 \,\exp\!\bigl(-2\,\lambda(X)\,Y\bigr),
 \quad Y \ge 0,
\]
which is proportional to an exponential density with rate \(2\,\lambda(X)\). The mean of \(Y\) under this squared density is
\[
 \mu'(X) 
 \;=\; \frac{1}{2\,\lambda(X)}.
\]
As established in the main text, the oracle population target is \(\mu'(X)\) rather than \(m(X)\). Consequently, at each point \(x\), the asymptotic bias is
\[
 \text{Bias}(x)
 \;=\; \mu'(x) \;-\; m(x)
 \;=\; \frac{1}{2\,\lambda(x)} \;-\; \frac{1}{\lambda(x)}
 \;=\; -\frac{1}{2\,\lambda(x)}.
\]
This example illustrates how the asymmetry of an exponential distribution can steer the estimator toward \(1/(2\,\lambda(X))\) rather than the true mean \(1/\lambda(X)\). Although such a shift introduces asymptotic bias, the robust weighting can still be advantageous in practical situations where outliers or heavy-tailed noise are significant concerns.

\end{document}